\def\B {\textit{$O$}}  
\def\I {\textit{$I$}}
\newtheorem{theorem}{Theorem}[section]
\newtheorem{lemma}[theorem]{Lemma}
\newtheorem{Lemma}[theorem]{Lemma}
\newlist{Property}{enumerate}{2}
\setlist[Property]{label=Property \arabic*., font=\textbf, itemindent=*}
\newcommand{\rgb}[1]{\textcolor[rgb]{0.00,0.00,0.00}{#1}}
\date{}
\begin{document}
\title{Complexity of Paired Domination Problems on Circle and $k$-Polygon Graphs}
	
\vspace{30pt}

\author{
    Ta-Yu Mu\thanks{
        Department of Computer Science and Information Engineering,
        National Taiwan University,
        Taipei 10617, \newline \mbox{} \hspace{11pt} Taiwan.
        Email: f08922132@ntu.edu.tw}
    \and
    Ching-Chi Lin\thanks{
        Department of Computer Science and Engineering,
        National Taiwan Ocean University,
        Keelung 20224, \newline \mbox{} \hspace{11pt} Taiwan. Corresponding author.
        Email: lincc@mail.ntou.edu.tw}
}

\maketitle              % typeset the header of the contribution
\begin{abstract}
A set $D \subseteq V$ is a dominating set of a graph $G$ if every vertex in $V - D$ is adjacent to at least one vertex in $D$. A dominating set $D$ is a paired-dominating set if the subgraph of $G$ induced by $D$ contains a perfect matching. In this paper, we prove that determining the minimum paired-dominating set in circle graphs is NP-complete. We further present an $O(n(\frac{n}{k^2-k})^{2k^2-2k})$-time algorithm for finding the minimum paired-dominating set in $k$-polygon graphs, a subclass of circle graphs. Additionally, we refine the existing algorithm of Elmallah and Stewart for computing the minimum dominating set in $k$-polygon graphs, reducing its time complexity from $O(n^{4k^2+3})$ to $O(n^{3k-5})$, and further extend it to find the minimum total dominating set.

\bigbreak

\noindent \textbf{Keywords:} {circle graph; $k$-polygon graph; domination; paired domination; total domination; NP-complete; Hamiltonian path.}
\end{abstract}

\baselineskip 19pt
\newpage
%%%%%%%%%%%%%%%%%%%%%%%%%%%%%%%%%%%%%%%%%%%%%%%%%%%%%%
\section{Introduction}
%%%%%%%%%%%%%%%%%%%%%%%%%%%%%%%%%%%%%%%%%%%%%%%%%%%%%%
Finding a dominating set is a fundamental problem in graph theory with numerous applications, including network design, facility location, and social network analysis~\cite{Haynes98-2,Haynes98}. Given a graph $G = (V, E)$, a dominating set $D \subseteq V$ is defined such that every vertex $v \in V - D$ is adjacent to at least one vertex in $D$. The minimum dominating set problem seeks to find a dominating set of minimum size. Over the past few decades, various extensions of domination have been explored, such as total domination~\cite{Cockayne80}, connected domination [28], independent domination~\cite{Goddard13}, and paired-domination~\cite{Haynes98-3}. In total domination, the induced subgraph $G[D]$ contains no isolated vertices, while in connected domination, $G[D]$ must be connected. Independent domination requires that the dominating set also forms an independent set.

A set $M \subseteq E$ is called a matching if no two edges in $M$ share a common vertex. A matching $M$ is perfect if every vertex in $V$ is covered by an edge in $M$. The concept of paired-domination requires that the induced subgraph $G[D]$ contains a perfect matching. Paired-domination was first introduced in~\cite{Haynes98-3} in the context of an area monitoring problem, where a minimum even number of guards were stationed at various sites to monitor all locations. Each guard was paired with another at a neighboring site, ensuring mutual backup. This concept has since found applications in various domains, including security and surveillance, resource allocation, facility location, network communication, and coding theory~\cite{Gupta13,Dreyer00,Borzooei15}.

The minimum paired-domination problem seeks to find a minimum dominating set $D$ such that $G[D]$ contains a perfect matching. Its NP-completeness for general graphs was established in~\cite{Haynes98-3}. Figure~\ref{figure:domination} illustrates an example of a minimum dominating set, a minimum paired-dominating set, and a minimum total dominating set. As below, we review prior research on paired domination.

In~\cite{Amjadi22}, the minimum subdivision number of edges was determined to increase the size of the minimum paired-dominating set. The sizes of minimum dominating sets, minimum total dominating sets, and minimum paired-dominating sets were analyzed in~\cite{Dettlaff22,Eakawinrujee23}. The concept of paired domination stability was introduced in~\cite{Gorzkowska23}, referring to the minimum non-isolated vertex set whose removal decreases the size of a paired-dominating set. Inspired by paired-domination, a strategic game involving two roles—dominator and staller—was proposed in~\cite{Gray23}, where the dominator aims to minimize the paired-dominating set while the staller seeks to maximize it. In~\cite{Henning20}, the maximum sizes of minimal paired-dominating sets were studied for specific graph classes, including bipartite graphs, threshold graphs, chain graphs, and proper interval graphs.

\begin{figure}
	\begin{center}
		\begin{tikzpicture}[scale=0.7]
			            \tikzset{
                vertex/.style ={circle, draw=black, fill=white, thick, minimum size=7mm, inner sep=2pt}
            }
            \draw node[vertex] (a) at(0, 5) {$a$};
            \draw node[vertex] (d) at(-1.5, 2) {$d$};
            \draw node[vertex] (e) at(1.5, 2) {$e$};
            \draw node[vertex] (b) at(-4, 3) {$b$};
            \draw node[vertex] (c) at(4, 3) {$c$};
            \draw node[vertex] (f) at(0, 0) {$f$};
            \draw node[vertex] (h) at(-3, -2) {$h$};
            \draw node[vertex] (g) at(3, -2) {$g$};
            \draw [thick] (a)--(b);
            \draw [thick] (a)--(c);
            \draw [thick] (a)--(d);
            \draw [thick] (a)--(e);
            \draw [thick] (f)--(d);
            \draw [thick] (f)--(e);
            \draw [thick] (f)--(g);
            \draw [thick] (f)--(h);
            \draw [thick] (h)--(b);
            \draw [thick] (h)--(d);
            \draw [thick] (g)--(e);
            \draw [thick] (g)--(h);
		\end{tikzpicture}
	\end{center}
	\vspace{-10pt}
	\caption{A graph with a minimum dominating set $\{a,g\}$, a minimum paired-dominating set $\{a,c,f,h\}$, and a minimum total dominating set $\{a,e,g\}$, .
	}
	\label{figure:domination}
\end{figure}

In~\cite{Goddard09}, a conjecture: given a connected graph of n vertices, each having degree at least three, its minimum paired-dominating set had size not greater than $\frac{4n}{7}$, if it was not the Petersen graph, was suggested. It was later verified for $r$-regular graphs in~\cite{Lu16} with $r \ge 4$ and in~\cite{Kosari22} with $r = 3$. 
In~\cite{Tuncel23}, the sizes of minimum paired-dominating sets were compared for two special families of transformation graphs. An $O(n^{4k^2+3})$-time algorithm for finding a minimum dominating set on a $k$-polygon graph was proposed in~\cite{ElMallah93}. Algorithms for computing minimum paired-dominating sets can be found in~\cite{Mu22,Lappas13}. Specifically, an $O(n + m)$-time algorithm was developed for distance-hereditary graphs in~\cite{Mu22}, while~\cite{Lappas13} presented a linear-time algorithm for permutation graphs, where $n$ and $m$ denote the number of vertices and edges, respectively. In~\cite{Lin22}, considering a weighted setting where each vertex is assigned a positive weight, an $O(n+m)$-time algorithm was proposed for finding a paired-dominating set of a block graph that minimizes the total vertex weight.

A circle graph $G = (V, E)$ represents an intersection model of chords positioned on a circle, where each vertex $v \in V$ corresponds uniquely to a chord, and an edge $(u, v) \in E$ exists if and only if the chords corresponding to $u$ and $v$ intersect. If the chords are positioned on a $k$-polygon, the resulting graph is a $k$-polygon graph. In fact, the graph shown in Figure~\ref{figure:circle_and_polygon} is both a circle graph and a 4-polygon graph. Their corresponding intersection models are shown in Figure~\ref{figure:circle_and_polygon}. It is important to note that $k$-polygon graphs form a subclass of circle graphs, as demonstrated in~\cite{Elmallah98}. Additionally, an $O(4^kn^2)$-time algorithm was proposed to determine whether a given circle graph is a $k$-polygon graph.

\begin{figure}
	\begin{center}
		\begin{tikzpicture}[scale=0.6]
			\input{circle_model}
		\end{tikzpicture}
		\hspace{1.5cm}
        \begin{tikzpicture}[scale=0.6]
			\input{polygon_model}
		\end{tikzpicture}
	\end{center}
%\vspace{-10pt}
\caption{Two intersection models for the graph shown in Figure~\ref{figure:domination}.}
\label{figure:circle_and_polygon}
\end{figure}

In this paper, we first prove that finding a minimum paired-dominating set on a circle graph is NP-complete. Next, we propose two polynomial-time algorithms: one for finding minimum dominating sets in $O(\frac{n^{2k^2-4k+1}}{(k^2-k)^{k^2-k} (k^2-3k)^{k^2-3k}})$ time, and the other for finding minimum paired-dominating sets in $O(n(\frac{n}{k^2-k})^{2k^2-2k})$ time, both for $k$-polygon graphs, a subclass of circle graphs. Furthermore, we improve the previous algorithm for the domination problem, reducing its time complexity to $O(n^{3k-5})$. Our results enhance the work of~\cite{ElMallah93}. In the next two sections, we present the NP-hardness proof and the two proposed algorithms. Finally, some concluding remarks are provided in Section~\ref{section:conclusion}.

%\hfill mds
 
%\hfill August 26, 2015

\smallskip
%%%%%%%%%%%%%%%%%%%%%%%%%%%%%%%%%%%%%%%%%%%%%%%%%%%%%%
\section{NP-completeness for Circle Graphs}
\label{section:circle}
%%%%%%%%%%%%%%%%%%%%%%%%%%%%%%%%%%%%%%%%%%%%%%%%%%%%%%
We begin by introducing our notations. Let the chord set $J(G)$, or simply $J$ if $G$ is unambiguous, denote the intersection model of $G=(V,E)$. For each \rgb{pair of} chords $c_i,c_j\in J$, corresponding to the vertices $v_i,v_j\in V$, respectively, $c_i$ intersects $c_j$ if and only if $(v_i,v_j)\in E$. We further define the close neighbor on $J$ as $N_J[c]=\{d\mid d\text{ intersects }c\}\cup\{c\}$ and $N_J[S]=\bigcup_{c\in S} N_J[c]$. For the sets $D\subseteq J$ and $M\subseteq J\times J$, \rgb{when} unambiguous, we also call $M$ a perfect matching of $J[D]$ if $G[\{v_i\mid c_i\in V\}]$ simultaneously contains a perfect matching $\{(v_i,v_j)\mid (c_i,c_j)\in M\}$.
%Moreover, if $G$ is an intersection graphs, particularly referring to circle graphs and $k$-polygon graphs in this paper, we use $J(G) = \{f(v) \mid v \in V(G)\}$ to denote their chord intersection model of $G$, where $(u,v) \in E(G)$ if and only if two chords $f(u)$ and $f(v)$ intersect, represented by $f(u) \chi f(v)$. The decision formulation of paired-domination problem is defined as below:\\

In our NP-completeness result, we choose to reduce from the Hamiltonian path problem on directed graphs, a well-known NP-complete problem for finding a path visiting each vertex of the graph exactly once, known as a Hamiltonian path~\cite{Garey76}. The decision formulation of this problem is defined as below:

%\noindent\textbf{Paired-domination problem}\\
%\texttt{INPUT}: A graph $G$ and an integer $k$.\\
%\texttt{OUTPUT}: Determine whether there exists a dominating set with cardinality less than or equal to $k$ such that $G[D]$ contains a perfect matching.\\

%Our object is to show that the paired-domination problem is NP-complete even when the given graph is restricted to circle graphs. It can be easy to verify that it belongs to the class NP. Hence, it suffices to ensure its NP-hardness by given the reduction from another NP-complete problem. In this case, we opt for the well-known Hamiltonian path problem, which has been shown to be NP-complete when restricted to directed graphs~\cite{Garey76}. The decision formulation of the Hamiltonian path problem is defined as below: \\

\noindent\textbf{Hamiltonian path problem}\\
\texttt{INPUT}: A directed graph $G_H$ with the ~designated starting vertex and ending vertex.\\
\texttt{OUTPUT}: Determine whether there exists a Hamiltonian path in $G_H$ from the starting vertex to the ending vertex.\\

In our reduction, we construct a circle graph $G_c$ in polynomial time corresponding to any instance $G_H$ of Hamiltonian path problem. Let $G_H=(V_H,E_H)$ where $|V_H|=n$ and $|E_H|=m$. The following theorem is involved in establishing our NP-completeness result, where the correctness \rgb{is} ensured by Lemmas~\ref{lemma:Hamiltonian_->_MPDS} and~\ref{lemma:Hamiltonian_<-_MPDS}.

\begin{theorem}\label{theorem:Hamiltonian_to_MPDS}
	The graph $G_H$ contains a Hamiltonian path if and only if $G_c$ has a paired-dominating set with cardinality $2n^2+2n-2$.
\end{theorem}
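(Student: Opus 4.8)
The plan is to prove the two directions of the equivalence separately, corresponding to the two lemmas cited in the excerpt. The core of the reduction is a gadget construction: for a directed graph $G_H$ on $n$ vertices, we build the circle graph $G_c$ out of $n$ ``row'' blocks, one per vertex of $G_H$, each containing a controlled number of chords, with inter-block chords encoding the arcs of $G_H$ and the designated start/end vertices. The magic number $2n^2+2n-2$ should decompose as a fixed per-row cost times $n$ (or $n-1$) plus correction terms coming from the start and end gadgets; before writing anything I would first pin down this arithmetic so the target cardinality is forced.

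For the forward direction (Lemma~\ref{lemma:Hamiltonian_->_MPDS}), I would take a Hamiltonian path $v_{\pi(1)}\to v_{\pi(2)}\to\cdots\to v_{\pi(n)}$ in $G_H$ and explicitly describe a paired-dominating set $D$ of $G_c$ of size exactly $2n^2+2n-2$. The idea is: within each row block, select a canonical paired set that dominates all chords local to that block; at each consecutive pair $(v_{\pi(i)},v_{\pi(i+1)})$ in the path, use the arc-chord of $G_H$ corresponding to that arc to ``stitch'' the two row-blocks together so that the boundary chords between them are dominated and the selected chords can still be perfectly matched. One then verifies two things: (i) every chord of $G_c$ is in $N_J[D]$ — this is a case analysis over the chord types (intra-row, arc chords, start/end gadget chords), and (ii) $J[D]$ has a perfect matching, which follows by exhibiting the matching pair-by-pair along the construction. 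Counting the selected chords gives the claimed cardinality.

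For the reverse direction (Lemma~\ref{lemma:Hamiltonian_<-_MPDS}), I would assume $G_c$ has a paired-dominating set $D$ with $|D|=2n^2+2n-2$ and argue that it must have the ``canonical'' structure above, from which a Hamiltonian path can be read off. The strategy is a counting/exchange argument: show that to dominate the chords inside each row block, $D$ must spend at least a fixed number of chords ``essentially inside'' that block, and that to dominate the boundary chords between blocks, $D$ must, for each block, pick up an arc chord linking it to exactly one other block in a way consistent with a path structure (respecting edge directions and the start/end constraints). Because the global budget $2n^2+2n-2$ is exactly the sum of these forced local costs with no slack, $D$ cannot afford any redundancy, so the ``linking'' arc chords must trace out a single Hamiltonian path rather than a union of paths/cycles; the perfect-matching requirement on $J[D]$ is what rules out degenerate configurations (e.g., isolated selections or spurious cycles) and forces the path to be connected and to run from the start vertex to the end vertex.

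The main obstacle will be the reverse direction: proving that the exact budget forces the canonical structure. Establishing the per-block lower bounds is routine, but the delicate part is showing that $D$ cannot cleverly ``reuse'' chords across block boundaries to beat the naive accounting — i.e., that domination of a boundary region genuinely costs what the construction intends — and that the selected inter-block chords, together with the matching constraint, cannot form anything other than one path through all $n$ blocks honoring arc orientations. I would handle this by designing the gadget with enough ``padding'' chords (sets of parallel chords that can only be dominated from within their own block) so that cross-boundary savings are provably impossible, making the lower bound tight per block and the equality case rigid; the perfect-matching condition is then invoked precisely to convert ``a spanning union of paths'' into ``a single Hamiltonian path from start to end.''
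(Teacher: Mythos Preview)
Your two-lemma split and the forward-direction plan (exhibit the set, verify domination, exhibit the matching, count) match the paper's approach exactly. The gap is entirely in the reverse direction, and it stems from not engaging with the specific construction of $G_c$ already laid out before the theorem.

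First, your mental model of the gadget (``one row block per vertex of $G_H$'') does not match the paper's construction, which is indexed by \emph{pairs} $(i,j)$: vertex $v_i$ together with a position $j\in\{1,\dots,n\}$. The chords $\ell_i^j,c_i^j,r_i^j,b_i^j$ live in an $n\times n$ grid, and the superscript $j$ is precisely what will record the position of $v_i$ along the Hamiltonian path. So ``per-block lower bounds'' in your sense do not exist here; the accounting is global.

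Second, the actual forcing mechanism in the reverse direction is not a per-block budget-with-no-slack argument but the following rigid count. The pendant chords $a'_i,b'^j_i,f'^j$ each have a unique neighbour, so any dominating set must contain every $a_i,b_i^j,f^j$; call this forced set $X$, with $|X|=n^2+2n-1$. Then $|D\setminus X|=n^2-1$, while there are $2n^2-2$ chords of type $\ell$ and $r$, none of which lies in $N[X]$. Since only type-$c$ and type-$e$ chords meet $\ell/r$-chords, and each meets at most two of them, $D\setminus X$ must consist entirely of $c$- and $e$-chords, each hitting \emph{exactly} two $\ell/r$-chords with no overlap. This tightness is the engine; your proposal does not identify it.

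Third, the role of the perfect matching is not to ``convert a union of paths into a single path.'' In the paper, the matching singles out, for each $i$, the unique $b_i^{p(i)}$ paired with $a_i$ (every other neighbour of $a_i$ is $a'_i\notin D$). One then runs an induction on the position index $j$: the start constraint forces $p^{-1}(1)=1$; for $j>1$ the exact covering of $\ell_{\cdot}^{j}$ by the remaining budget forces a unique $e$-chord $e^{j-1}_{p^{-1}(j-1),p^{-1}(j)}\in D$, which simultaneously certifies the arc $(v_{p^{-1}(j-1)},v_{p^{-1}(j)})\in E_H$ and the injectivity of $p^{-1}$. Connectedness and the start/end vertices are thus forced by the $\ell_1^1,r_n^n$ omissions and the induction, not by a separate matching-vs-union-of-paths argument. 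Your plan would need to be rewritten around this count-then-induct mechanism to go through.
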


\smallskip

Now we begin introducing the construction of the $G_c$ based on the given instance $G_H$ of the Hamiltonian path problem. Without loss of generality, we may assume the starting vertex as $v_1$ and the ending vertex as $v_n$. Instead of presenting $G_c$ directly, we detail the structure of its chord intersection model $J$ as below. Fig~\ref{figure:circle_reduction_part} illustrates the structure of $J$ for the instance $G_H$ where $V_H=\{v_1,v_2,v_3\}$ and $E_H=\{(v_1,v_3),(v_2,v_3)\}$ as an example.

The chord set $J$ consists of eight types of chords: type-$\ell$, type-$c$, type-$r$, type-$e$, type-$f$, type-$f'$, type-$a$, type-$a'$, type-$b$, and type-$b'$ chords. The type-$\ell$, type-$c$, and type-$r$ chords are defined relative to $V_H$. Specifically, for each vertex $v_i\in V_H$ (where $1\le i\le n$), the corresponding chords $\ell_i^j$, $c_i^j$, and $r_i^j$ exist for $1 \le j \le n$, with the exception that $\ell_1^1$ and $r_n^n$ do not exist. 
The type-$\ell$ and type-$r$ chords are pairwise non-intersection and arranged in counterclockwise order from $i=1$ to $n$ and from $j=1$ to $n$. The type-$c$ chord intersects both the type-$\ell$ and type-$r$ chords corresponding to the same $i,j$ pair. Fig.~\ref{figure:circle_reduction_part}$(a)$ provides an example of the types of chords mentioned.

For each edge $(v_x,v_y)\in E(G_H)$, there exists corresponding type-$e$ chords $e_{x,y}^j\in J$ for $1\le j<n$ intersecting both $r_x^j$ and $\ell_y^{j+1}$. We omit the description of intersections between pairs of type-$e$ chords, as their intersections are irrelevant for our reduction. %For the completeness of our construction, we ensure that the order of endpoints covered by type-$\ell$ chords within the set of type-$e$ chords follows a counterclockwise arrangement. Specifically, $e_{x_1,y}^j$ precedes $e_{x_2,y}^j$ if $x_1<x_2$. Similarly, when arranging endpoints covered by the same type-$r$ chord of those type-$e$ in counterclockwise order, the endpoint $e_{x,y_1}^j$ precedes $e_{x,y_2}^j$ if $y_1<y_2$.
Intersecting with those type-$e$ chords $e_{x,y}^j$ with the same superscript, the type-$f$ chord $f^j$ for $1\le j<n$ has one endpoint falls in the open interval in between $r_n^j$ and $\ell_n^{j+1}$. The other endpoints of type-$f$ chords are located counterclockwise immediately after $c_1^1$ and arranged from $j=1$ to $n-1$. This arrangement ensures that those type-$f$ chords do not intersect each other. The type-$f'$ chord $f'^j$ where $1\le j<n$ is exactly and only intersects the type-$f$ chord $f^j$ with the same index $j$. 
Fig.~\ref{figure:circle_reduction_part}$(b)$ shows the types of chords mentioned above, specifically emphasizing the type-$e$ and type-$f$ chords with superscript $j=1$.

%Lastly, we introduce the type-$b$, type-$b'$, type-$a$, and type-$a'$ chords. 
The type-$b$ chord $b_i^j$, where $1\le i,j\le n$, has one endpoint positioned counterclockwise immediately after $f'^1$, arranged in counterclockwise order from $i=1$ to $n$ and from $j=1$ to $n$. The other endpoint of each $b_i^j$ is positioned at a specific intersection with $c_i^j$ while avoiding $\ell_i^j$ and $r_i^j$. The type-$a$ chord $a_i$ for $1\le i\le n$ exclusively intersects all $b_i^j$ with the same index $i$. The type-$b'$ chord $b'^j_i$ for $1\le i,j\le n$ and type-$a'$ chord $a'_i$ for $1\le i\le n$ merely intersects with $b^j_i$ and $a_i$, respectively. In Fig.~\ref{figure:circle_reduction_part}$(c)$, all type of chords are depicted, highlighting \rgb{those} type-$b$ and type-$a$ chords associated with the subscript $j=1$.

The cardinality of $J$ is within a polynomial bound: there are $n^2-1$ type-$\ell$ and type-$r$ chords, $n^2$ type-$c$ chords, $nm$ type-$e$ chords, $n-1$ type-$f$ and type-$f'$ chords, $n^2$ type-$b$ and type-$b'$ chords, and $n$ type-$a$ and type-$a'$ chords. Altogether, there are $5n^2+4n-4+nm$ chords for $J$. Therefore, both the circle graph $G_c$ and the chord intersection model $J$ can be constructed in $O(n \cdot (n+m))$ time.

\begin{figure}
	\begin{center}
		\begin{tikzpicture}[scale=0.57]
			%\usepackage{tikz}
%\begin{tikzpicture}[scale=0.9]
\def\r {5.5}
\def\rr {11}
\def\k {16}
\def\deg {360/\k}
\def\bias {0.2}

%chord intersection model

		\draw (0,0) circle (\r);
		
		\foreach \i in {8,...,16}
		{
		        \draw[thick]({\r*cos((\i-\bias)*\deg)},{\r*sin((\i-\bias)*\deg)})..controls ({0.9*\r*cos((\i-0.25*\bias)*\deg)},{0.9*\r*sin((\i-0.25*\bias)*\deg)}) and ({0.9*\r*cos((\i+0.25*\bias)*\deg)},{0.9*\r*sin((\i+0.25*\bias)*\deg)})..({\r*cos((\i+\bias)*\deg)},{\r*sin((\i+\bias)*\deg)});
		}
		
		\foreach \i in {1,2,3}
		{
			\foreach \j in {1,2,3}
			{
				{\node at ({(\r+0.5)*cos((\i+3*\j+4)*\deg)},{(\r+0.5)*sin((\i+3*\j+4)*\deg)}){$c_{\i}^{\j}$};}
				
				\ifthenelse{\NOT \i=1 \OR \NOT \j=1}{\node at ({(\r+0.5)*cos((\i+3*\j+4-1.75*\bias)*\deg)},{(\r+0.5)*sin((\i+3*\j+4-1.75*\bias)*\deg)}){$\ell_{\i}^{\j}$};}
				
				\ifthenelse{\NOT \i=3 \OR \NOT \j=3}{\node at ({(\r+0.5)*cos((\i+3*\j+4+1.75*\bias)*\deg)},{(\r+0.5)*sin((\i+3*\j+4+1.75*\bias)*\deg)}){$r_{\i}^{\j}$};}
			}
			
		}

		\foreach \i in {8,...,15}
		{
		        \draw[thick]({\r*cos((\i+0.25*\bias)*\deg)},{\r*sin((\i+0.25*\bias)*\deg)})..controls ({0.9*\r*cos((\i+\bias-0.25*\bias)*\deg)},{0.9*\r*sin((\i+\bias-0.25*\bias)*\deg)}) and ({0.9*\r*cos((\i+\bias+0.25*\bias)*\deg)},{0.9*\r*sin((\i+\bias+0.25*\bias)*\deg)})..({\r*cos((\i+1.75*\bias)*\deg)},{\r*sin((\i+1.75*\bias)*\deg)});
		}
		
		\foreach \i in {9,...,16}
		{
		        \draw[thick]({\r*cos((\i-1.75*\bias)*\deg)},{\r*sin((\i-1.75*\bias)*\deg)})..controls ({0.9*\r*cos((\i-\bias-0.25*\bias)*\deg)},{0.9*\r*sin((\i-\bias-0.25*\bias)*\deg)}) and ({0.9*\r*cos((\i-\bias+0.25*\bias)*\deg)},{0.9*\r*sin((\i-\bias+0.25*\bias)*\deg)})..({\r*cos((\i-0.25*\bias)*\deg)},{\r*sin((\i-0.25*\bias)*\deg)});
		}
		
		\node at (0,-1.25*\r){$(a)$};
		%\node at (-1.05*\r,\r){$(a)$};

		
		\end{tikzpicture}
            \begin{tikzpicture}[scale=0.57]
			%\usepackage{tikz}
%\begin{tikzpicture}[scale=0.9]
\def\r {5.5}
\def\rr {11}
\def\k {16}
\def\deg {360/\k}
\def\bias {0.2}

%chord intersection model

		\draw (0,0) circle (\r);
		
		\foreach \i in {8,...,16}
		{
		        \draw[gray]({\r*cos((\i-\bias)*\deg)},{\r*sin((\i-\bias)*\deg)})..controls ({0.9*\r*cos((\i-0.25*\bias)*\deg)},{0.9*\r*sin((\i-0.25*\bias)*\deg)}) and ({0.9*\r*cos((\i+0.25*\bias)*\deg)},{0.9*\r*sin((\i+0.25*\bias)*\deg)})..({\r*cos((\i+\bias)*\deg)},{\r*sin((\i+\bias)*\deg)});
		}
		
		\foreach \i in {1,2,3}
		{
			\foreach \j in {1,2,3}
			{
				{\node [gray] at ({(\r+0.5)*cos((\i+3*\j+4)*\deg)},{(\r+0.5)*sin((\i+3*\j+4)*\deg)}){$c_{\i}^{\j}$};}
				
				\ifthenelse{\NOT \i=1 \OR \NOT \j=1}{\node [gray] at ({(\r+0.5)*cos((\i+3*\j+4-1.75*\bias)*\deg)},{(\r+0.5)*sin((\i+3*\j+4-1.75*\bias)*\deg)}){$\ell_{\i}^{\j}$};}
				
				\ifthenelse{\NOT \i=3 \OR \NOT \j=3}{\node [gray] at ({(\r+0.5)*cos((\i+3*\j+4+1.75*\bias)*\deg)},{(\r+0.5)*sin((\i+3*\j+4+1.75*\bias)*\deg)}){$r_{\i}^{\j}$};}
			}
		}

		\foreach \i in {8,...,15}
		{
		        \draw[gray]({\r*cos((\i+0.25*\bias)*\deg)},{\r*sin((\i+0.25*\bias)*\deg)})..controls ({0.9*\r*cos((\i+\bias-0.25*\bias)*\deg)},{0.9*\r*sin((\i+\bias-0.25*\bias)*\deg)}) and ({0.9*\r*cos((\i+\bias+0.25*\bias)*\deg)},{0.9*\r*sin((\i+\bias+0.25*\bias)*\deg)})..({\r*cos((\i+1.75*\bias)*\deg)},{\r*sin((\i+1.75*\bias)*\deg)});
		}
		
		\foreach \i in {9,...,16}
		{
		        \draw[gray]({\r*cos((\i-1.75*\bias)*\deg)},{\r*sin((\i-1.75*\bias)*\deg)})..controls ({0.9*\r*cos((\i-\bias-0.25*\bias)*\deg)},{0.9*\r*sin((\i-\bias-0.25*\bias)*\deg)}) and ({0.9*\r*cos((\i-\bias+0.25*\bias)*\deg)},{0.9*\r*sin((\i-\bias+0.25*\bias)*\deg)})..({\r*cos((\i-0.25*\bias)*\deg)},{\r*sin((\i-0.25*\bias)*\deg)});
		}
		
		\foreach \i in {1,4}
		{
		        \draw[gray]({\r*cos(((\i+8)+1.5*\bias)*\deg)},{\r*sin(((\i+8)+1.5*\bias)*\deg)})..controls ({0.4*\r*cos(((\i+10)-1.25*\bias)*\deg)},{0.4*\r*sin(((\i+10)-1.25*\bias)*\deg)}) and ({0.4*\r*cos(((\i+10)+1.5*\bias)*\deg)},{0.4*\r*sin(((\i+10)+1.5*\bias)*\deg)})..({\r*cos(((\i+12)-1.25*\bias)*\deg)},{\r*sin(((\i+12)-1.25*\bias)*\deg)});
		}
		
		\foreach \i in {1,4}
		{
		        \draw[thick]({\r*cos(((\i+7)+1.5*\bias)*\deg)},{\r*sin(((\i+7)+1.5*\bias)*\deg)})..controls({0.3*\r*cos(((\i+9.5)-1.25*\bias)*\deg)},{0.3*\r*sin(((\i+9.5)-1.25*\bias)*\deg)}) and ({0.3*\r*cos(((\i+9.5)+1.5*\bias)*\deg)},{0.3*\r*sin(((\i+9.5)+1.5*\bias)*\deg)})..({\r*cos(((\i+12)-1.5*\bias)*\deg)},{\r*sin(((\i+12)-1.5*\bias)*\deg)});

		}
		
		\node [gray] at (-7.2*\r/12,-4.45*\r/12){$e_{2,3}^1$};
		\node [gray] at (1.54*\r/12,-6.8*\r/12){$e_{2,3}^2$};
		\node at (-7.2*\r/12,-1.4*\r/12){$e_{1,3}^1$};
		\node at (-1.95*\r/12,-7.27*\r/12){$e_{1,3}^2$};

		\node at (-1.2*\r,0){};
		\node at (0,-1.4*\r){$(b)$};
            %\node at (-1.05*\r,\r){$(b)$};

		
		\end{tikzpicture}
		\begin{tikzpicture}[scale=0.57]
			%\usepackage{tikz}
%\begin{tikzpicture}[scale=0.9]
\def\r {5.5}
\def\rr {11}
\def\k {16}
\def\deg {360/\k}
\def\bias {0.2}

%chord intersection model

		\draw (0,0) circle (\r);
		
		\foreach \i in {8,...,16}
		{
		        \draw[gray]({\r*cos((\i-\bias)*\deg)},{\r*sin((\i-\bias)*\deg)})..controls ({0.9*\r*cos((\i-0.25*\bias)*\deg)},{0.9*\r*sin((\i-0.25*\bias)*\deg)}) and ({0.9*\r*cos((\i+0.25*\bias)*\deg)},{0.9*\r*sin((\i+0.25*\bias)*\deg)})..({\r*cos((\i+\bias)*\deg)},{\r*sin((\i+\bias)*\deg)});
		}
		
		\foreach \i in {1,2,3}
		{
			\foreach \j in {1,2,3}
			{
				\node [gray] at ({(\r+0.5)*cos((\i+3*\j+4)*\deg)},{(\r+0.5)*sin((\i+3*\j+4)*\deg)}){$c_{\i}^{\j}$};
				
				\ifthenelse{\NOT \i=1 \OR \NOT \j=1}{\node [gray] at ({(\r+0.5)*cos((\i+3*\j+4-1.75*\bias)*\deg)},{(\r+0.5)*sin((\i+3*\j+4-1.75*\bias)*\deg)}){$\ell^{\j}_{\i}$};}
				
				\ifthenelse{\NOT \i=3 \OR \NOT \j=3}{\node [gray] at ({(\r+0.5)*cos((\i+3*\j+4+1.75*\bias)*\deg)},{(\r+0.5)*sin((\i+3*\j+4+1.75*\bias)*\deg)}){$r^{\j}_{\i}$};}
			}
        }

		\foreach \i in {8,...,15}
		{
		        \draw[gray]({\r*cos((\i+0.25*\bias)*\deg)},{\r*sin((\i+0.25*\bias)*\deg)})..controls ({0.9*\r*cos((\i+\bias-0.25*\bias)*\deg)},{0.9*\r*sin((\i+\bias-0.25*\bias)*\deg)}) and ({0.9*\r*cos((\i+\bias+0.25*\bias)*\deg)},{0.9*\r*sin((\i+\bias+0.25*\bias)*\deg)})..({\r*cos((\i+1.75*\bias)*\deg)},{\r*sin((\i+1.75*\bias)*\deg)});
		}
		
		\foreach \i in {9,...,16}
		{
		        \draw[gray]({\r*cos((\i-1.75*\bias)*\deg)},{\r*sin((\i-1.75*\bias)*\deg)})..controls ({0.9*\r*cos((\i-\bias-0.25*\bias)*\deg)},{0.9*\r*sin((\i-\bias-0.25*\bias)*\deg)}) and ({0.9*\r*cos((\i-\bias+0.25*\bias)*\deg)},{0.9*\r*sin((\i-\bias+0.25*\bias)*\deg)})..({\r*cos((\i-0.25*\bias)*\deg)},{\r*sin((\i-0.25*\bias)*\deg)});
		}

        \foreach \i in {4}
		{
		        \draw[gray]({\r*cos(((\i+7)+1.5*\bias)*\deg)},{\r*sin(((\i+7)+1.5*\bias)*\deg)})..controls({0.3*\r*cos(((\i+9.5)-1.25*\bias)*\deg)},{0.3*\r*sin(((\i+9.5)-1.25*\bias)*\deg)}) and ({0.3*\r*cos(((\i+9.5)+1.5*\bias)*\deg)},{0.3*\r*sin(((\i+9.5)+1.5*\bias)*\deg)})..({\r*cos(((\i+12)-1.5*\bias)*\deg)},{\r*sin(((\i+12)-1.5*\bias)*\deg)});
		        
		        \draw[gray]({\r*cos(((\i+8)+1.5*\bias)*\deg)},{\r*sin(((\i+8)+1.5*\bias)*\deg)})..controls ({0.4*\r*cos(((\i+10)-1.25*\bias)*\deg)},{0.4*\r*sin(((\i+10)-1.25*\bias)*\deg)}) and ({0.4*\r*cos(((\i+10)+1.5*\bias)*\deg)},{0.4*\r*sin(((\i+10)+1.5*\bias)*\deg)})..({\r*cos(((\i+12)-1.25*\bias)*\deg)},{\r*sin(((\i+12)-1.25*\bias)*\deg)});
		}
		
		\foreach \i in {1}
		{
		        \draw[thick]({\r*cos(((\i+7)+1.5*\bias)*\deg)},{\r*sin(((\i+7)+1.5*\bias)*\deg)})..controls({0.3*\r*cos(((\i+9.5)-1.25*\bias)*\deg)},{0.3*\r*sin(((\i+9.5)-1.25*\bias)*\deg)}) and ({0.3*\r*cos(((\i+9.5)+1.5*\bias)*\deg)},{0.3*\r*sin(((\i+9.5)+1.5*\bias)*\deg)})..({\r*cos(((\i+12)-1.5*\bias)*\deg)},{\r*sin(((\i+12)-1.5*\bias)*\deg)});
		        
		        \draw[thick]({\r*cos(((\i+8)+1.5*\bias)*\deg)},{\r*sin(((\i+8)+1.5*\bias)*\deg)})..controls ({0.4*\r*cos(((\i+10)-1.25*\bias)*\deg)},{0.4*\r*sin(((\i+10)-1.25*\bias)*\deg)}) and ({0.4*\r*cos(((\i+10)+1.5*\bias)*\deg)},{0.4*\r*sin(((\i+10)+1.5*\bias)*\deg)})..({\r*cos(((\i+12)-1.25*\bias)*\deg)},{\r*sin(((\i+12)-1.25*\bias)*\deg)});
		}

		\node at (-7.2*\r/12,-1.3*\r/12){$e_{1,3}^1$};
		\node [gray] at (-7.2*\r/12,-4.3*\r/12){$e_{2,3}^1$};
		\node at (-1.65*\r/12,-6.75*\r/12){$e_{1,3}^2$};
		\node [gray] at (1.54*\r/12,-6.7*\r/12){$e_{2,3}^2$};

		\foreach \i in {6.75,7}
		{
			\draw[gray] ({\r*cos((\i-0.5*\bias)*\deg)},{\r*sin((\i-0.5*\bias)*\deg)})..controls({0.9*\r*cos((\i-0.25*\bias)*\deg)},{0.9*\r*sin((\i-0.25*\bias)*\deg)}) and ({0.9*\r*cos((\i+0.25*\bias)*\deg)},{0.9*\r*sin((\i+0.25*\bias)*\deg)})..({\r*cos((\i+0.5*\bias)*\deg)},{\r*sin((\i+0.5*\bias)*\deg)});
		}
  
		\draw[gray]({\r*cos(7*\deg)},{\r*sin(7*\deg)})..controls (0,0)..({\r*cos((13.5)*\deg)},{\r*sin((13.5)*\deg)});
		\draw[thick]({\r*cos(6.75*\deg)},{\r*sin(6.75*\deg)})..controls (0,0)..({\r*cos((10.5)*\deg)},{\r*sin((10.5)*\deg)});
		\node [gray] at ({(\r+0.5)*cos(7*\deg)},{(\r+0.5)*sin(7*\deg)}){$f^2$};
		\node at ({(\r+0.5)*cos(6.65*\deg)},{(\r+0.5)*sin(6.65*\deg)}){$f^1$};
		
		\node [gray] at ({(0.93*\r)*cos(7*\deg+\r)},{(0.93*\r)*sin(7*\deg+\r)}){$f'^2$};
		\node [gray] at ({(0.93*\r)*cos(6.65*\deg-\r)},{(0.93*\r)*sin(6.65*\deg-\r)}){$f'^1$};
		
		\node at (0,1.025*\r){};
		\node at (0,-1.25*\r){$(c)$};
		%\node at (-1.05*\r,\r){$(c)$};
		\end{tikzpicture}
		\begin{tikzpicture}[scale=0.57]
			%\usepackage{tikz}
%\begin{tikzpicture}[scale=0.9]
\def\r {5.5}
\def\rr {11}
\def\k {16}
\def\deg {360/\k}
\def\bias {0.2}

%chord intersection model

		\draw (0,0) circle (\r);
		
		\foreach \i in {8,...,16}
		{
		        \draw[gray]({\r*cos((\i-\bias)*\deg)},{\r*sin((\i-\bias)*\deg)})..controls ({0.9*\r*cos((\i-0.25*\bias)*\deg)},{0.9*\r*sin((\i-0.25*\bias)*\deg)}) and ({0.9*\r*cos((\i+0.25*\bias)*\deg)},{0.9*\r*sin((\i+0.25*\bias)*\deg)})..({\r*cos((\i+\bias)*\deg)},{\r*sin((\i+\bias)*\deg)});
		}
		
		\foreach \i in {1,2,3}
		{
			\foreach \j in {1,2,3}
			{
				{\node [gray] at ({(\r+0.5)*cos((\i+3*\j+4)*\deg)},{(\r+0.5)*sin((\i+3*\j+4)*\deg)}){$c_{\i}^{\j}$};}
				
				\ifthenelse{\NOT \i=1 \OR \NOT \j=1}{\node [gray] at ({(\r+0.5)*cos((\i+3*\j+4-1.75*\bias)*\deg)},{(\r+0.5)*sin((\i+3*\j+4-1.75*\bias)*\deg)}){$\ell_{\i}^{\j}$};}
				
				\ifthenelse{\NOT \i=3 \OR \NOT \j=3}{\node [gray] at ({(\r+0.5)*cos((\i+3*\j+4+1.75*\bias)*\deg)},{(\r+0.5)*sin((\i+3*\j+4+1.75*\bias)*\deg)}){$r_{\i}^{\j}$};}
			}
			
		}

		\foreach \i in {8,...,15}
		{
		        \draw[gray]({\r*cos((\i+0.25*\bias)*\deg)},{\r*sin((\i+0.25*\bias)*\deg)})..controls ({0.9*\r*cos((\i+\bias-0.25*\bias)*\deg)},{0.9*\r*sin((\i+\bias-0.25*\bias)*\deg)}) and ({0.9*\r*cos((\i+\bias+0.25*\bias)*\deg)},{0.9*\r*sin((\i+\bias+0.25*\bias)*\deg)})..({\r*cos((\i+1.75*\bias)*\deg)},{\r*sin((\i+1.75*\bias)*\deg)});
		}
		
		\foreach \i in {9,...,16}
		{
		        \draw[gray]({\r*cos((\i-1.75*\bias)*\deg)},{\r*sin((\i-1.75*\bias)*\deg)})..controls ({0.9*\r*cos((\i-\bias-0.25*\bias)*\deg)},{0.9*\r*sin((\i-\bias-0.25*\bias)*\deg)}) and ({0.9*\r*cos((\i-\bias+0.25*\bias)*\deg)},{0.9*\r*sin((\i-\bias+0.25*\bias)*\deg)})..({\r*cos((\i-0.25*\bias)*\deg)},{\r*sin((\i-0.25*\bias)*\deg)});
		}
		
		\foreach \i in {1,4}
		{
		        \draw[gray]({\r*cos(((\i+7)+1.5*\bias)*\deg)},{\r*sin(((\i+7)+1.5*\bias)*\deg)})..controls({0.3*\r*cos(((\i+9.5)-1.25*\bias)*\deg)},{0.3*\r*sin(((\i+9.5)-1.25*\bias)*\deg)}) and ({0.3*\r*cos(((\i+9.5)+1.5*\bias)*\deg)},{0.3*\r*sin(((\i+9.5)+1.5*\bias)*\deg)})..({\r*cos(((\i+12)-1.5*\bias)*\deg)},{\r*sin(((\i+12)-1.5*\bias)*\deg)});
		        
		        \draw[gray]({\r*cos(((\i+8)+1.5*\bias)*\deg)},{\r*sin(((\i+8)+1.5*\bias)*\deg)})..controls ({0.4*\r*cos(((\i+10)-1.25*\bias)*\deg)},{0.4*\r*sin(((\i+10)-1.25*\bias)*\deg)}) and ({0.4*\r*cos(((\i+10)+1.5*\bias)*\deg)},{0.4*\r*sin(((\i+10)+1.5*\bias)*\deg)})..({\r*cos(((\i+12)-1.25*\bias)*\deg)},{\r*sin(((\i+12)-1.25*\bias)*\deg)});
		}

		\node [gray] at (-7.2*\r/12,-1.3*\r/12){$e_{1,3}^1$};
		\node [gray] at (-7.2*\r/12,-4.35*\r/12){$e_{2,3}^1$};
		\node [gray] at (-1.65*\r/12,-6.75*\r/12){$e_{1,3}^2$};
		\node [gray] at (1.54*\r/12,-6.7*\r/12){$e_{2,3}^2$};

		\draw[gray]({\r*cos(6.75*\deg)},{\r*sin(6.75*\deg)})..controls (0,0)..({\r*cos((10.5)*\deg)},{\r*sin((10.5)*\deg)});
		\draw[gray]({\r*cos(7*\deg)},{\r*sin(7*\deg)})..controls (0,0)..({\r*cos((13.5)*\deg)},{\r*sin((13.5)*\deg)});
		\node [gray] at ({(\r+0.5)*cos(7*\deg)},{(\r+0.5)*sin(7*\deg)}){$f^2$};
		\node [gray] at ({(\r+0.5)*cos(6.65*\deg)},{(\r+0.5)*sin(6.65*\deg)}){$f^1$};

		\node [gray] at ({(0.93*\r)*cos(7*\deg+\r)},{(0.93*\r)*sin(7*\deg+\r)}){$f'^2$};
		\node [gray] at ({(0.93*\r)*cos(6.65*\deg-\r)},{(0.93*\r)*sin(6.65*\deg-\r)}){$f'^1$};
		
		\foreach \i in {6.75,7}
		{
			\draw[gray]({\r*cos((\i-0.5*\bias)*\deg)},{\r*sin((\i-0.5*\bias)*\deg)})..controls({0.9*\r*cos((\i-0.25*\bias)*\deg)},{0.9*\r*sin((\i-0.25*\bias)*\deg)}) and ({0.9*\r*cos((\i+0.25*\bias)*\deg)},{0.9*\r*sin((\i+0.25*\bias)*\deg)})..({\r*cos((\i+0.5*\bias)*\deg)},{\r*sin((\i+0.5*\bias)*\deg)});
		}

		\foreach \i in {2,4}
		{
			\draw[gray]({\r*cos((1.125+\i)*\deg)},{\r*sin((1.125+\i)*\deg)})..controls (0,0)..({\r*cos((8+\i/2)*\deg)},{\r*sin((8+\i/2)*\deg)});
			\draw[gray]({\r*cos((1.5+\i)*\deg)},{\r*sin((1.5+\i)*\deg)})..controls (0,0)..({\r*cos((11+\i/2)*\deg)},{\r*sin((11+\i/2)*\deg)});
			\draw[gray]({\r*cos((1.875+\i)*\deg)},{\r*sin((1.875+\i)*\deg)})..controls (0,0)..({\r*cos((14+\i/2)*\deg)},{\r*sin((14+\i/2)*\deg)});
		}
		
		\foreach \i in {2.6,3.125,3.5,3.875,4.6,5.125,5.5,5.875}
		{
			\draw[gray]({\r*cos((\i-0.5*\bias)*\deg)},{\r*sin((\i-0.5*\bias)*\deg)})..controls({0.9*\r*cos((\i-0.25*\bias)*\deg)},{0.9*\r*sin((\i-0.25*\bias)*\deg)}) and ({0.9*\r*cos((\i+0.25*\bias)*\deg)},{0.9*\r*sin((\i+0.25*\bias)*\deg)})..({\r*cos((\i+0.5*\bias)*\deg)},{\r*sin((\i+0.5*\bias)*\deg)});
		}
		
		\foreach \i in {3,5}
		{
			\draw[gray]({\r*cos((\i-0.4)*\deg)},{\r*sin((\i-0.4)*\deg)})..controls({0.6*\r*cos((\i-0.05)*\deg)},{0.6*\r*sin((\i-0.05)*\deg)}) and ({0.6*\r*cos((\i+0.75)*\deg)},{0.6*\r*sin((\i+0.75)*\deg)})..({\r*cos((\i+1.1)*\deg)},{\r*sin((\i+1.1)*\deg)});
		}
		
		\foreach \i in {0.6,1.125,1.5,1.875}
		{
			\draw[gray]({\r*cos((\i-0.5*\bias)*\deg)},{\r*sin((\i-0.5*\bias)*\deg)})..controls({0.9*\r*cos((\i-0.25*\bias)*\deg)},{0.9*\r*sin((\i-0.25*\bias)*\deg)}) and ({0.9*\r*cos((\i+0.25*\bias)*\deg)},{0.9*\r*sin((\i+0.25*\bias)*\deg)})..({\r*cos((\i+0.5*\bias)*\deg)},{\r*sin((\i+0.5*\bias)*\deg)});
		}
		
		\foreach \i in {0}
		{
			\draw[thick]({\r*cos((1.125+\i)*\deg)},{\r*sin((1.125+\i)*\deg)})..controls (0,0)..({\r*cos((8+\i/2)*\deg)},{\r*sin((8+\i/2)*\deg)});
			\draw[thick]({\r*cos((1.5+\i)*\deg)},{\r*sin((1.5+\i)*\deg)})..controls (0,0)..({\r*cos((11+\i/2)*\deg)},{\r*sin((11+\i/2)*\deg)});
			\draw[thick]({\r*cos((1.875+\i)*\deg)},{\r*sin((1.875+\i)*\deg)})..controls (0,0)..({\r*cos((14+\i/2)*\deg)},{\r*sin((14+\i/2)*\deg)});
		}

		\foreach \i in {1}
		{
			\draw[thick]({\r*cos((\i-0.4)*\deg)},{\r*sin((\i-0.4)*\deg)})..controls({0.6*\r*cos((\i-0.05)*\deg)},{0.6*\r*sin((\i-0.05)*\deg)}) and ({0.6*\r*cos((\i+0.75)*\deg)},{0.6*\r*sin((\i+0.75)*\deg)})..({\r*cos((\i+1.1)*\deg)},{\r*sin((\i+1.1)*\deg)});
		}

		\foreach \i in {1}
		{
			\node at ({(\r+0.5)*cos((\i-0.4)*\deg)},{(\r+0.5)*sin((\i-0.4)*\deg)}){$a_{\i}$};
			\node [gray] at ({(0.9*\r)*cos((2*\i-1.62)*\deg)},{(0.9*\r)*sin((2*\i-1.62)*\deg)}){$a'_{\i}$};
			\foreach \j in {1,2,3}
			{
				\node at ({(\r+0.5)*cos((2*\i+3*\j/8-1.25)*\deg)},{(\r+0.5)*sin((2*\i+3*\j/8-1.25)*\deg)}){$b_{\i}^{\j}$};
				\node [gray] at ({(0.88*\r)*cos((2*\i+3*\j/8-1.25)*\deg-0.8*\r)},{(0.88*\r)*sin((2*\i+3*\j/8-1.25)*\deg-0.8*\r)}){$b'^{\j}_{\i}$};
			}
		}
		
		\foreach \i in {2,3}
		{
			\node [gray] at ({(\r+0.5)*cos((2*\i-1.4)*\deg)},{(\r+0.5)*sin((2*\i-1.4)*\deg)}){$a_{\i}$};
			\node [gray] at ({(0.9*\r)*cos((2*\i-1.6)*\deg)},{(0.9*\r)*sin((2*\i-1.6)*\deg)}){$a'_{\i}$};
			\foreach \j in {1,2,3}
			{
				\node [gray] at ({(\r+0.5)*cos((2*\i+3*\j/8-1.25)*\deg)},{(\r+0.5)*sin((2*\i+3*\j/8-1.25)*\deg)}){$b_{\i}^{\j}$};
				\node [gray] at ({(0.88*\r)*cos((2*\i+3*\j/8-1.25)*\deg-0.8*\r)},{(0.88*\r)*sin((2*\i+3*\j/8-1.25)*\deg-0.8*\r)}){$b'^{\j}_{\i}$};
			}
		}

        %\node at (0,1.15*\r){};
	\node at (0,-1.25*\r){$(d)$};
        %\node at (-1.05*\r,1.125*\r){$(d)$};
		\end{tikzpicture}
	\end{center}
\vspace{-10pt}
\caption{The corresponding intersection model $J$ of the given instance $G_H=\{\{v_1,v_2,v_3\},\{(v_1,v_3),(v_2,v_3)\}\}$.}
\label{figure:circle_reduction_part}
\end{figure}

Now, we demonstrate the correctness of Theorem~\ref{theorem:Hamiltonian_to_MPDS}, by showing $G_H$ contains a Hamiltonian path if and only if $J$ has a paired-dominating set with cardinality $2n^2+2n-2$, as stated in Lemma~\ref{lemma:Hamiltonian_->_MPDS} and Lemma~\ref{lemma:Hamiltonian_<-_MPDS} for the necessity and sufficient conditions, respectively. 

%123 may be removed
%Let $C$ be the chord set corresponding to the paired-dominating set $D$. Each type-$b'$ chord $b'^j_i$ only intersects the corresponding type-$b$ chord $b_i^j$ for $1\le i,j\le n$, thus ensuring that every type-$b$ chord is included in $C$. If there exists a Hamiltonian path $P=(v_{\sigma(1)}, v_{\sigma(2)}, \ldots, v_{\sigma(n)})$ on $G_H$ from $v_1$ to $v_n$, we expect $C$ to contain $b^{\sigma^{-1}(i)}i$ paired with $a_{i}$ for $1\le i\le n$ and $f^j$ paired with $d^j_{\sigma(j),\sigma({j+1})}$ for $1\le j<n$, where $\sigma^{-1}(i)$ represents the order of appearance for $v_i$ in the Hamiltonian path $P$. Additionally, for the remaining type-$b$ chords $b_i^j$ for $1\le i,j\le n$, we expect them to be paired with the corresponding type-$c$ chord $c_i^j$.
%As an example for Fig~\ref{figure:circle_reduction_part}, a Hamiltonian path $(v_1,v_3,v_2)$ is in $G_H$. Then, the dominating chord set $C$ contain $\{a_1,a_2,a_3,b_1^1,b_2^1,b_3^1,b_1^2,b_2^2,b_3^2,b_1^3,b_2^3,b_3^3,c_2^1,c_3^1,c_1^2,c_2^2,c_1^3,c_3^3,e_{1,3}^1,e_{2,3}^2,f^1,f^2\}$. Furthermore, a perfect matching of $C$ is given by $\{(a_1,b_1^1),(a_2,b_3^2),(a_3,b_2^3),(c_2^1,b_2^1),(c_3^1,b_3^1),(c_1^2,b_1^2),(c_2^2,b_2^2),(c_1^3,b_1^3),(c_2^3,b_2^3),(e_{1,3}^1,f^1),(e_{2,3}^2,f^2)\}$. The details are shown as following.

\begin{Lemma}\label{lemma:Hamiltonian_->_MPDS}
	If $G_H$ contains a Hamiltonian path , then $J$ has a paired-dominating set with cardinality $2n^2+2n-2$.
\end{Lemma}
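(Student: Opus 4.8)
The plan is to construct an explicit paired-dominating set $D$ of $J$ from a Hamiltonian path $P=v_{\pi(1)},v_{\pi(2)},\dots,v_{\pi(n)}$ in $G_H$, where $\pi(1)=1$ and $\pi(n)=n$, and to exhibit a perfect matching of $J[D]$ along the way. The natural idea is to use the chosen path to pick, for each ``level'' $j$, a distinguished vertex $v_{\pi(j)}$ and to put the chords $c_{\pi(j)}^{\,j}$ into $D$, pairing each with its partner (either $\ell_{\pi(j)}^{\,j}$ or $r_{\pi(j)}^{\,j}$, whichever is present). The remaining type-$c$ chords $c_i^j$ with $i\neq\pi(j)$ must still be dominated; this is exactly what the type-$b$ chords are for, so I would put every $b_i^j$ with $i\neq\pi(j)$ into $D$ and pair it with its private neighbor $b'^{\,j}_i$. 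The type-$a$ chords need a dominator too; I would include each $a_i$ paired with $a'_i$, which simultaneously dominates every $b_i^j$. Finally the type-$e$ chords $e_{x,y}^j$ are dominated precisely when the path uses the edge $(v_{\pi(j)},v_{\pi(j+1)})$: including $f^j$ (paired with $f'^j$) dominates all $e^j_{\star}$ chords through the $r_n^j$–$\ell_n^{j+1}$ endpoint, and $r_{\pi(j)}^{\,j}$ (already in $D$ as the partner of $c_{\pi(j)}^{\,j}$ when it exists) together with $\ell_{\pi(j+1)}^{\,j+1}$ handles the incidences with $e^j_{\pi(j),\pi(j+1)}$.

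The key steps, in order, are: (1) fix the Hamiltonian path and set up the notation $\pi$; (2) write down $D$ explicitly as the disjoint union of the type-$c$/type-$\ell$/type-$r$ chords along the path, all type-$b$ chords off the path, all type-$a$ and type-$a'$ chords, all type-$b'$ chords paired with the selected type-$b$ chords, and all type-$f$/type-$f'$ chords; (3) verify domination type by type --- type-$\ell,r$ chords are dominated by their $c$-chord or by the adjacent $c$-chord in $D$; type-$c$ chords by the path selection or by the corresponding $b_i^j\in D$; type-$e$ chords by $f^j$ and by the path edge; type-$f,f',a,a',b,b'$ chords by their explicit partners; (4) exhibit the perfect matching of $J[D]$ as the disjoint union of the pairs $\{c_{\pi(j)}^{\,j},\ell_{\pi(j)}^{\,j}\text{ or }r_{\pi(j)}^{\,j}\}$, $\{b_i^j,b'^{\,j}_i\}$, $\{a_i,a'_i\}$, and $\{f^j,f'^j\}$; (5) count: $2(n-1)$ for the path chords (since $\ell_1^1$ and $r_n^n$ are absent, one pair degenerates --- I must be careful here, possibly pairing the orphaned $c$-chord with an extra selected chord), $2(n^2-n)$ for the off-path type-$b$/type-$b'$ pairs, $2n$ for the type-$a$/type-$a'$ pairs, and $2(n-1)$ for the type-$f$/type-$f'$ pairs, totalling $2n^2+2n-2$.

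The main obstacle I anticipate is the careful bookkeeping around the boundary level irregularities --- that $\ell_1^1$ and $r_n^n$ do not exist --- and ensuring that the type-$e$ chords corresponding to edges \emph{not} on the Hamiltonian path are still dominated; this forces a genuine use of the path-edge condition rather than an arbitrary choice of $\pi$, and I expect it is where the construction is tightest. Concretely, $e^j_{x,y}$ with $x\neq\pi(j)$ or $y\neq\pi(j{+}1)$ is dominated only by $f^j$, $r_x^j$, and $\ell_y^{j+1}$, and among these only $f^j$ is guaranteed in $D$; I must check that $f^j$ indeed intersects \emph{every} $e^j_{\star}$ chord (which the construction asserts, via the shared endpoint in the $r_n^j$–$\ell_n^{j+1}$ interval), so that domination of the type-$e$ chords comes for free and the path edges only matter for the converse direction (Lemma~\ref{lemma:Hamiltonian_<-_MPDS}). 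Verifying this intersection claim from the placement description, together with confirming that $D$ contains no accidental extra adjacencies that would spoil the stated matching being \emph{perfect} on $J[D]$, is the delicate part; the rest is routine counting.
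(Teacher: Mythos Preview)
Your construction has two fatal problems. First, it fails to dominate: for any level $j$ and any $i\neq\pi(j)$, the chord $\ell_i^j$ has as its only neighbors $c_i^j$ and the type-$e$ chords $e_{x,i}^{\,j-1}$, none of which lie in your $D$; the same holds for $r_i^j$. (Recall that $b_i^j$ is explicitly placed to intersect $c_i^j$ while \emph{avoiding} $\ell_i^j$ and $r_i^j$, so your inclusion of $b_i^j$ does not help here.) Second, even setting domination aside, your count overshoots: the ``path chord'' pairs $\{c_{\pi(j)}^{\,j},\ell_{\pi(j)}^{\,j}\text{ or }r_{\pi(j)}^{\,j}\}$ actually work at both boundary levels (use $r_1^1$ at $j=1$ and $\ell_n^n$ at $j=n$), so that block contributes $2n$, not $2(n-1)$; summing your blocks gives $2n+2(n^2-n)+2n+2(n-1)=2n^2+4n-2$, which is $2n$ too large.

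The idea you are missing is that the primed chords $a'_i$, $b'^{\,j}_i$, $f'^j$ are \emph{not} intended to enter $D$ at all: they are pendant leaves whose only role is, in the converse lemma, to force every $a_i$, $b_i^j$, $f^j$ into any dominating set. The paper's construction instead starts from \emph{all} $n^2$ chords $b_i^j$ together with all $n-1$ chords $f^j$, and then assigns each a matching partner: the on-path chord $b_{\pi(j)}^{\,j}$ is paired with $a_{\pi(j)}$; each off-path $b_i^j$ (with $i\neq\pi(j)$) is paired with $c_i^j$, and it is precisely these off-path $c$-chords that dominate the stray $\ell_i^j$ and $r_i^j$; each $f^j$ is paired with the path-edge chord $e^j_{\pi(j),\pi(j+1)}$, which dominates the on-path $r_{\pi(j)}^{\,j}$ and $\ell_{\pi(j+1)}^{\,j+1}$. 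The count is then $n^2+(n-1)+n+(n^2-n)+(n-1)=2n^2+2n-2$, and domination of every chord type follows directly.
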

\begin{proof}
    Suppose that $G_H$ contains a Hamiltonian Path from $v_1$ to $v_n$. There exists an invertible function $\sigma:\{1,2,\ldots,n\}\rightarrow\{1,2,\ldots,n\}$, where ${\sigma(1)}=1$ and ${\sigma(n)}=n$, such that the sequence $P=(v_{\sigma(1)}, v_{\sigma(2)}, \ldots, v_{\sigma(n)})$ forms a Hamiltonian path. Note that the $i$-th node in $P$ is represented by $\sigma^{-1}(i)$.
    Let $S$ be the chord set consists of all type-$b$ and type-$f$ chords initially. One can find that $|S|=n^2+n-1$ and $N_J[S]$ including all type-$b$, type-$b'$, type-$f$, and type-$e$ chords. Thus, it suffices to transform $S$ into a paired-dominating of $J$ by subsequently adding a unique neighbor for each chord $c\in S$ such that finally the remaining type-$a$, type-$a'$, type-$\ell$, and type-$r$ chords are all in $N_J[S]$.

    For each $b^{i}_{\sigma(i)}$ where $1\le i\le n$, we add $a^i\in N_J[b^{i}_{\sigma(i)}]$ into $S$. This ensures that all type-$a$ and type-$a'$ chords are included by $N_J[S]$. Furthermore, for each $f^j$ where $1\le j<n$, we add $e^j_{\sigma^{-1}(j),\sigma^{-1}(j+1)}\in N_J[f^j]$ into $S$ to make $\ell^j_{\sigma^{-1}(j)}$ and $r^{j+1}_{\sigma^{-1}(j+1)}$ are included in $N_J[S]$. One can verify that the remaining chords needing to be dominated are $\ell_i^j$ and $r_i^j$ with $j\neq\sigma(i)$. These $i,j$ correspond precisely to the chords $b_i^j$ in $S$ for which we have not yet found a neighbor. We can add all $c_i^j\in N_J[b_i^j]$ into $S$ and finally make every type-$\ell$ and type-$r$ chord in $N_J[S]$.
\end{proof}

\begin{Lemma}\label{lemma:Hamiltonian_<-_MPDS}
	If $G_c$ has a paired-dominating set with cardinality $2n^2+2n-2$, then $G_H$ contains a Hamiltonian path.
\end{Lemma}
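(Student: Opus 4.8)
The plan is to reverse the construction of Lemma~\ref{lemma:Hamiltonian_->_MPDS}. Suppose $J$ has a paired-dominating set $D$ with $|D|=2n^2+2n-2$, and fix a perfect matching $M$ of $J[D]$. I would first show that $D$ is forced into essentially the canonical shape produced in Lemma~\ref{lemma:Hamiltonian_->_MPDS}, and then read a Hamiltonian path of $G_H$ off the type-$e$ chords that $D$ is forced to contain.

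\emph{Step 1 (forced chords).} Each type-$a'$ chord $a'_i$ intersects exactly one chord, namely $a_i$; hence if $a'_i\notin D$ then domination forces $a_i\in D$, and if $a'_i\in D$ then $M$ matches $a'_i$ to its unique neighbor $a_i$, again forcing $a_i\in D$. The same argument applied to the type-$b'$ and type-$f'$ chords gives $b_i^j\in D$ and $f^j\in D$ for all admissible $i,j$. These $n^2+2n-1$ chords are therefore all in $D$, leaving a surplus set $R:=D\setminus(\{a_i\}\cup\{b_i^j\}\cup\{f^j\})$ with exactly $n^2-1$ chords.

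\emph{Step 2 (pinning down $R$).} The chords not already dominated by $\{a_i\}\cup\{b_i^j\}\cup\{f^j\}$ are precisely the $2n^2-2$ type-$\ell$ and type-$r$ chords: each $c_i^j$ is dominated by $b_i^j$, each type-$e$ chord by $f^j$, and each primed chord by $a_i$, $b_i^j$, or $f^j$. So every type-$\ell$ or type-$r$ chord is either in $D$ or adjacent to a type-$c$ or type-$e$ chord of $D$, whereas each type-$c$ or type-$e$ chord is adjacent to at most two of these targets. Comparing the number of targets with $|R|=n^2-1$ by a double-counting argument then forces: $D$ contains no type-$\ell$, type-$r$, type-$a'$, type-$b'$, or type-$f'$ chord; $R$ consists of exactly $n^2-n$ type-$c$ chords and exactly $n-1$ type-$e$ chords; $c_1^1,c_n^n\notin D$; and every type-$\ell$ or type-$r$ chord is dominated by \emph{exactly one} chord of $D$. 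I expect this step to be the main obstacle: making the count tight enough to exclude every ``wasteful'' chord and to obtain ``dominated exactly once'' leans on the fine adjacency structure of $J$ (type-$b$, type-$c$, and type-$f$ chords are pairwise non-adjacent within their type; $b_i^j$ avoids $\ell_i^j$ and $r_i^j$; $f^j$ touches only its own type-$e$ chords and $f'^j$), and on treating the two missing chords $\ell_1^1$ and $r_n^n$ correctly.

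\emph{Step 3 (reading off the path).} Since $D$ has no primed chords, $M$ matches each $a_i$ to some $b_i^{\tau(i)}$, defining a function $\tau\colon[n]\to[n]$; then each $b_i^j$ with $j\neq\tau(i)$ must be matched to $c_i^j$, so $c_i^j\in D\iff j\neq\tau(i)$; and each $f^j$ must be matched to a type-$e$ chord, which, since $D$ contains only $n-1$ of them, is a unique $e_{x_j,y_j}^j$ with $(v_{x_j},v_{y_j})\in E_H$. Now apply ``dominated exactly once''. For $1\le j<n$, the target $r_i^j$ is dominated by $c_i^j$ or by $e_{x_j,y_j}^j$ (the latter only when $i=x_j$), so $c_i^j\notin D\iff i=x_j$, i.e.\ $\tau(i)=j\iff i=x_j$; the boundary cases ($r_n^n$ absent, $c_n^n\notin D$, and $r_1^1$) give $\tau(i)=n\iff i=n$ and $x_1=1$, and we set $x_n:=n$. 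Hence $j\mapsto x_j$ is a two-sided inverse of $\tau$, so $x_1,\dots,x_n$ is a permutation of $[n]$ with $x_1=1$ and $x_n=n$. Finally, ``dominated exactly once'' applied to $\ell_{y_j}^{j+1}$ forces $c_{y_j}^{j+1}\notin D$, i.e.\ $\tau(y_j)=j+1$, so $y_j=x_{j+1}$ for $1\le j<n-1$, while $\ell_n^n$ forces $y_{n-1}=n=x_n$. Therefore the $n-1$ edges $(v_{x_j},v_{y_j})=(v_{x_j},v_{x_{j+1}})$ all lie in $E_H$, and $(v_{x_1},v_{x_2},\dots,v_{x_n})$ is a Hamiltonian path of $G_H$ from $v_1$ to $v_n$, as required.
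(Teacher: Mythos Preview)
Your proposal follows essentially the same route as the paper's proof. Both arguments (i) force all $a_i$, $b_i^j$, $f^j$ into $D$ via the degree-one primed chords, (ii) use the tight double count of the $2n^2-2$ type-$\ell$/type-$r$ targets against the $n^2-1$ surplus chords to conclude that the surplus consists solely of type-$c$ and type-$e$ chords, each hitting exactly two targets and each target hit exactly once, and (iii) recover the Hamiltonian path from the matching structure. The only cosmetic difference is in (iii): the paper runs an induction on the position index~$j$, whereas you argue globally from the ``dominated exactly once'' constraint; the two are equivalent. One small caution for when you write it out: the exact $n^2-n$/$n-1$ split between type-$c$ and type-$e$ chords in your Step~2 does not fall out of the double count alone---it is really the matching constraints of Step~3 (each $c_i^j\in D$ must pair with $b_i^j$, each $f^j$ must pair with some $e^j$) that pin it down, so you may want to reorder those two claims.
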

\begin{proof}
    Suppose $D$ is a paired-dominating set of $J$ with cardinality $2n^2+2n-2$. Since each type-$a'$, type-$b'$, and type-$f'$ chord is in $N[D]$, $D$ \rgb{includes} the union of all type-$a$, type-$b$, and type-$f$ chords, which we denoted by $X$. One can verify that $|X|=n^2+2n-1$ and $N[X]$ contains no any type-$\ell$ or type-$r$ chord. Furthermore, the cardinality of type-$\ell$ and type-$r$ chords in total is $2n^2-2$, which equals $2|D-X|$. Those imply that $D-X$ \rgb{consists} of some type-$c$ and type-$e$ chords, as those are the only candidates that can intersect at most twice with the type-$\ell$ and type-$r$ chords.

    Note that the candidates of $a_i$ to be paired in a perfect matching of $J[D]$ are only the type-$b$ chords with the same subscript $i$ for $1\le i\le n$, which we denoted those type-$b$ chord as $b_i^{p(i)}$. Thus, it suffices to show that $p^{-1}(i)$ is unique for $1\le i\le n$ while $e_{p^{-1}(i-1),p^{-1}(i)}^i\in D$ if $i>1$, proving the existence of a Hamiltonian path $P=(p^{-1}(1),p^{-1}(2),\ldots,p^{-1}(n))$. y using the induction based on the index $i$. Considering the initial value $i=1$, for $j\neq 1$, $c_1^j$ is the only chord intersecting $\ell_1^j$ and twice type-$\ell$ and type-$r$ chords which have not been dominated yet, implying that $c_1^j\in D$ paired with $b_1^j$ for $j\neq 1$. Thus, $p^{-1}(1)=1$ is unique. y applying the induction hypothesis, using the similar method, one can refer that $c_{i-1}^j\in D$ for $j\neq p^{-1}(i-1)$. Therefore, the candidates for intersecting $\ell_i^j$ and twice the chords of type-$\ell$ and type-$r$ which have not been dominated yet should be $c_i^j$ or $e_{p^{-1}(i-1),i}^{j-1}$. However, $b_i^{p^{(-1)(i)}}$ have been paired with $a^{p^(-1)(i)}$, meaning $c_i^{p^{-1}(i)}\notin D$. Those imply that $e_{p^{-1}(i-1),p^{-1}(i)}^{j-1}\in D$ and $p^{-1}(i)$ is unique since $c_i^j\in D$ for $j\neq p^{-1}(i)$. Hence, the induction stands affirmed and the Hamiltonian path $P$ exists.
\end{proof}

Below, we show the NP-completeness of the paired-domination problem on circle graphs. Obviously, it falls in NP. Moreover, for each graph $G_H$, one can construct a circle graph $G_c$ and determine whether $G_H$ contains a Hamiltonian path or not by Theorem~\ref{theorem:Hamiltonian_to_MPDS}. Since the construction of $G_c$ \rgb{takes} polynomial time, the paired-domination problem on circle graph also remains NP-complete.

\begin{theorem}\label{theorem:circle_is_NPC}
	The paired-domination problem on circle graphs is NP-complete.
\end{theorem}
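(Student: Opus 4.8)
The plan is to assemble Theorem~\ref{theorem:circle_is_NPC} from the pieces already in place. First I would verify membership in NP: given a circle graph $G_c$ together with a candidate vertex set $D$ and a candidate matching $M$, one checks in polynomial time that $D$ is a dominating set (every vertex outside $D$ has a neighbor in $D$), that $M$ is a matching in $G_c[D]$, that $M$ saturates every vertex of $D$, and that $|D|$ meets the prescribed bound; all of these are straightforward graph traversals, so the certificate $(D,M)$ is polynomially verifiable.

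For NP-hardness I would reduce from the Hamiltonian path problem on directed graphs, which is NP-complete~\cite{Garey76}. Given an instance $G_H=(V_H,E_H)$ with designated start $v_1$ and end $v_n$, I would invoke the construction described just before Theorem~\ref{theorem:Hamiltonian_to_MPDS}, producing the chord intersection model $J$ and hence the circle graph $G_c$. The key observations to record are: (i) $J$ has $5n^2+4n-4+nm$ chords, and the model — hence $G_c$ — is built in $O(n(n+m))$ time, which is polynomial in the size of $G_H$; and (ii) by Theorem~\ref{theorem:Hamiltonian_to_MPDS} (whose correctness is furnished by Lemmas~\ref{lemma:Hamiltonian_->_MPDS} and~\ref{lemma:Hamiltonian_<-_MPDS}), $G_H$ has a Hamiltonian path from $v_1$ to $v_n$ if and only if $G_c$ admits a paired-dominating set of cardinality exactly $2n^2+2n-2$. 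Thus the map $G_H\mapsto (G_c,\,2n^2+2n-2)$ is a polynomial-time many-one reduction from Hamiltonian path to the paired-domination decision problem on circle graphs.

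Combining the two parts yields the theorem: the problem is in NP and is NP-hard, hence NP-complete. There is essentially no new mathematical obstacle here — the substantive work (designing the gadget chords and proving the biconditional) has already been carried out in the construction and in Lemmas~\ref{lemma:Hamiltonian_->_MPDS} and~\ref{lemma:Hamiltonian_<-_MPDS}. The only point needing a word of care is stating precisely what the certificate for the decision problem is: since paired-domination is defined via the existence of a perfect matching in $G_c[D]$, the witness must include not just $D$ but also such a matching (or, equivalently, one notes that a perfect matching of a given graph can itself be found in polynomial time, so $D$ alone suffices). With that settled, the proof is a two-line assembly of the preceding results.
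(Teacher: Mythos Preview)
Your proposal is correct and follows essentially the same approach as the paper: membership in NP is immediate from a polynomially verifiable certificate, and NP-hardness comes from the polynomial-time construction of $G_c$ together with Theorem~\ref{theorem:Hamiltonian_to_MPDS} (via Lemmas~\ref{lemma:Hamiltonian_->_MPDS} and~\ref{lemma:Hamiltonian_<-_MPDS}). The paper's own argument is the same two-line assembly, only stated more tersely; your added remark that $D$ alone suffices as a certificate (since a perfect matching in $G_c[D]$ can be computed in polynomial time) is a harmless elaboration.
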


\smallskip
%%%%%%%%%%%%%%%%%%%%%%%%%%%%%%%%%%%%%%%%%%%%%%%%%%%%%%
\section{Algorithm for $k$-polygon graphs}
\label{section:k-polygon}
%%%%%%%%%%%%%%%%%%%%%%%%%%%%%%%%%%%%%%%%%%%%%%%%%%%%%%
In this section, we proposed two polynomial time algorithms, improved from work of~\cite{ElMallah93}, one for finding minimum dominating sets in $O(n(\frac{n}{k^2-k})^{2k^2-4k})$ time and the other for finding minimum paired-dominating sets in $O(n(\frac{n}{k^2-k})^{2k^2-2k})$ time, both on k-polygon graphs, a subclass of circle graphs. Next, we further improve the previous result for domination problem into a time complexity $O(n^{3k-5})$ and extend it to total domination problem in also $O(n^{3k-5})$.

Our results improved the work of~\cite{ElMallah93}. In the next two sections, the NP-hardness proof and the two proposed algorithms are presented. Finally, before ending this chapter, some remarks are given in Section~\ref{section:conclusion}.

%%%%%%%%%%%%%%%%%%%%%%%%%%%%%%%%%%%%%%%%%%%%%%%%%%%%%%
\subsection{Domination between Two Sides}
\label{sectionalgo_for_two_sides}
%%%%%%%%%%%%%%%%%%%%%%%%%%%%%%%%%%%%%%%%%%%%%%%%%%%%%%
In this subsection, we focus on the chords $c$ that lie on exactly two sides, $S_i$ and $S_j$, that is, $c\in J_{ij}$. We first consider the case where sides $S_i$ and $S_j$ are adjacent, assuming without loss of generality that $j = i+1$.
Given that $a_i\in S_i$ and $a_j\in S_j$, the following lemmas serve as auxiliary results for our algorithms addressing the domination and paired-domination problems of $J$.

\begin{lemma} \label{lemma:adjacent_O(n)_permutation}
    For $i=j-1$, the minimum set $S\subseteq J_{ij}$ satisfying $J'_{ij} = c([a_i, r_i(J)]\cap [\ell_j(J), a_j]) \subseteq N_{J}[S]$ can be determined in $O(|J_{ij}|)$ time.
\end{lemma}
\begin{proof}
    We union an auxiliary chord set $Aux = \{x_k \mid 1\le k\le 5\}$ with $J_{ij}$, assuming without loss of generality that each chord $x_i$ also intersects the chords of $J_{ij}$ sharing the same endpoints. The chord $x_1$ has one endpoint at $a_i-1$ on $S_i$ and the other at $r_j(J)+3$ on $S_j$, while the chord $x_2$ has one endpoint at $a_j+1$ on $S_j$ and the other at $\ell_i(J)-3$ on $S_i$. The chord $x_3$, which intersects only $x_1$ and $x_2$, has endpoints $\ell_i(J)-1$ on $S_i$ and $r_j(J)+1$ on $S_j$. The chord $x_4$, intersecting $x_1$, has one endpoint at $\ell_i(J)-2$ on $S_i$ and the other at $r_j(J)+4$ on $S_j$. Symmetrically, the chord $x_5$, intersecting $x_2$, has one endpoint at $r_j(J)+2$ on $S_j$ and the other at $\ell_i(J)+4$ on $S_i$

    Since $J_{ij}$ remains a chord set of $2$-polygon graph, which is equivalent to a permutation graph, one can apply the $O(n)$-time algorithm~\cite{ElMallah93} for finding a minimum dominating set $D$ of $J_{ij}$, where $n = |J_{ij}|$ in this context. Furthermore, given that $N[c(x_3)] = \{c(x_1),c(x_2),c(x_3)\}$, $N[c(x_4)] = \{c(x_1),c(x_4),c(x_5)\}$, and $N[c(x_5)] = \{c(x_2),c(x_4),c(x_5)\}$, it follows that there are at least two chord from $Aux$ in $D$. Moreover, since $Aux\subseteq N_{J_{ij}}[\{c(x_1),c(x_2)\}]$, we can suppose without loss of generality that $D\cap Aux = \{c(x_1),c(x_2)\}$. Additionally, one can verify that $J_{ij} - J'_{ij}\subseteq N_{J}[\{c(x_1),c(x_2)\}]$ and $J'_{ij}\cap N_{J}[\{c(x_1),c(x_2)\}] = \emptyset$. Therefore, the minimum set $S\subseteq J_{ij}$ satisfying $J'_{ij}\subseteq N_{J}[S]$ is $S= D-Aux$.
\end{proof}

\begin{lemma} \label{lemma:adjacent_paired_O(n)_permutation}
    For $i=j-1$, the minimum set $S\subseteq J_{ij}$ satisfying $J'_{ij} = c([a_i, r_i(J)]\cap [\ell_j(J), a_j]) \subseteq N_{J}[S]$, where $J[S]$ contains a perfect matching, can be determined in $O(|J_{ij}|)$ time.
\end{lemma}
\begin{proof}
    As in the proof of Lemma~\ref{lemma:adjacent_O(n)_permutation}, we take the union of the auxiliary chord set $Aux = \{x_k \mid 1\le k\le 5\}$ with $J_{ij}$. Since $J_{ij}$ is still a chord set of $2$-polygon graph, which is the same as a permutation graph, we can find a minimum paired-dominating set $D$ of $J_{ij}$, ensuring that $J_{ij}[D]$ contains a perfect matching $M$, by using the $O(n)$-time algorithm from~\cite{Lappas13}, where $n = |J_{ij}|$ here. Following a similar discussion as in the proof of Lemma~\ref{lemma:adjacent_O(n)_permutation}, we can assume without loss of generality that $D\cap Aux = \{c(x_1),c(x_2)\}$ and claim that the set $S = D - Aux$ is the minimum set satisfying $J'_{ij}\subseteq N_{J}[S]$. Hence, it suffices to show that $J[S]$ contains a perfect matching. If $(c(x_1),c(x_2))\in M$, the statement obviously holds. Otherwise, supposing that the $(c(x_1),d_1),(c(x_2),d_2)\in M$, where $d_1$ and $d_2$ are two specific chords of $D$. One can verify that $M \cup \{d_1,d_2\} - \{(c(x_1),d_1),(c(x_2),d_2)\}$ remains a perfect matching of $J[S]$. Thus, this lemma holds.
\end{proof}

We can extend the previous results to a more general version as below. Given that $b_i,c_i\in S_i$ and $b_j,c_j\in S_j$, the similar methods used to prove Lemmas~\ref{lemma:adjacent_O(n)_permutation} and~\ref{lemma:adjacent_paired_O(n)_permutation} can be similarly applied to establish the correctness of Lemmas~\ref{lemma:general_O(n)_permutation} and~\ref{lemma:general_paired_O(n)_permutation}, respectively.

\begin{lemma} \label{lemma:general_O(n)_permutation}
    The minimum set $S\subseteq J_{ij}$ satisfying $J'_{ij} = c([b_i, c_i]\cap [b_j, c_j]) \subseteq N_{J_{ij}}[S]$ can be determined in $O(|J_{ij}|)$ time.
\end{lemma}
    
\begin{lemma} \label{lemma:general_paired_O(n)_permutation}
    The minimum set $S\subseteq J_{ij}$ satisfying $J'_{ij} = c([b_i, c_i]\cap [b_j, c_j]) \subseteq N_{J_{ij}}[S]$, where $J[S]$ contains a perfect matching, can be determined in $O(|J_{ij}|)$ time.
\end{lemma}

%%%%%%%%%%%%%%%%%%%%%%%%%%%%%%%%%%%%%%%%%%%%%%%%%%%%%%
\subsection{Algorithms for $k$-polygon graphs}
\label{sectionalgo_1_for_k-polygon}
%%%%%%%%%%%%%%%%%%%%%%%%%%%%%%%%%%%%%%%%%%%%%%%%%%%%%%
In this subsection, we provide the initial algorithms for paired-domination and domination problems on $k$-polygon graphs. Given a chord set $D\subseteq J$, the definition of the outer boundary $\B$ of $D$ consists of $\B_{ij}$ as below. For $D$ as a dominating set of $J$, the chord set $J' = J-N_J[\B]$, denotes the subset of $J$ not dominated by the outer boundary $\B$.

\begin{center}  
    $
    \B_{ij}=
    \begin{cases}
        \emptyset   &\text{if $D_{ij}=\emptyset$},\\
	c(\{\ell_i(D_{ij}),r_j(D_{ij})\})  &\text{if $i=j-1$},\\
	c(\{\ell_t(D_{ij}), r_t(D_{ij})\mid t\in \{i,j\}\})     &\text{otherwise.}
    \end{cases}
    $
\end{center}

\begin{lemma} \label{lemma:outer_boundary}
    If $i=j-1$, there are endpoints $a_i\in S_i$ and $a_j\in S_j$ such that $J'_{ij} = c([a_i,a_j])$. Otherwise, there are endpoints $b_i,c_i\in S_i$ and $b_j,c_j\in S_j$ such that $J'_{ij} = c([b_i,c_i])\cap c([b_j,c_j])$.
    Furthermore, determining all $J'_{ij}$ can be done in $O(n)$ time, and for each chord $j\in J'_{ij}$, $j\notin N_J[D-D_{ij}]$.
\end{lemma}
\begin{proof}
    We first consider the case that $j = i+1$. Let $a_i = r_i(N_J[\B]) + 1$ and $a_j = \ell_i(N_J[\B]) - 1$, ensuring that $c([a_i,a_j])\subseteq J-N_J[\B]$. Furthermore, it can be verified that each chord in $J_{ij} - c([a_i,a_j])$ intersects either $c(r_i(\B))$ or $c(\ell_j(\B))$. Hence, $J'_{ij} = c([a_i,a_j])$. Then, consider the case that $j \neq i+1$. For simplicity, We first separate $\B$ into two part: $\B_1 = N_J[c\{[r_j(D_{ij}), \ell_i(D_{ij})]\}]$ and $\B_2 = N_J[c\{[r_i(D_{ij}), \ell_i(D_{ij})]\}]$.
    Similarly, let $b_i = r_i(\B_1) + 1$, $c_i = \ell_j(\B_2) - 1$, $b_j = \ell_j(\B_2) - 1$, and $c_j = \ell_j(\B_1) + 1$. 
    Those ensure that $c([b_i,c_i])\cap c([b_j,c_j])\subseteq J-N_J[\B]$. Moreover, one can verify that each chord in $J_{ij} - c([b_i,c_i])\cap c([b_j,c_j]))$ intersects at least one of $c(r_i(\B_1))$, $c(\ell_j(\B_1))$, $c(r_i(\B_2))$, and $c(\ell_j(\B_2))$. Thus, $J'_{ij} = c([b_i,c_i])\cap c([b_j,c_j])$.

    Since all such endpoints can be identified by scanning every endpoints of $J$ once, $J'_{ij}$ can be computed in $O(n)$ time for all $i,j$.
    Moreover, for each chord $j\in J'_{ij}$, because $j\notin N_J[\B]$, where $\B$ includes every chord with the leftmost and rightmost endpoints for each pair of sides of $D$, it follows that $j\notin N_J[D-D_{ij}]$.
\end{proof}

Considering $D$ as a paired-dominating set of $J$ with $J[D]$ containing a perfect matching $\mathcal M$, according to its outer boundary $\B$, we further define the inner boundary $\I$ as follows. For simplicity, denote those chords paired with another locating on the same two sides in $\mathcal M$ as $M_{ij}=\{d_1,d_2\mid d_1,d_2\in D_{ij}\text{ and }(d_1,d_2)\in \mathcal M\}$ and $D'_{ij}=(D-M_{ij})\cup B$.

\begin{center}  
    $
    \I_{ij}=
    \begin{cases}
        \emptyset   &\text{if $J'_{ij}=\emptyset$},\\
	c(\{r_i(D'_{ij}), \ell_j(D'_{ij})\})  &\text{if $i=j-1$},\\
	c(\{\ell_t(D'_{ij}), r_t(D'_{ij})\mid t\in \{i,j\}\})    &\text{otherwise.}
    \end{cases}
    $
\end{center}

Note that $J'' = J-N_J[\B\cup \I]$, denotes the chord subset of $J$ dominated by neither outer boundary $\B$ nor inner boundary $\I$. $J''_{ij}$ can be determined as described below. The proof is similar to that in Lemma~\ref{lemma:outer_boundary}, so we omit it here.

\begin{lemma} \label{lemma:inner_boundary}
    %If $J'_{ij}=\emptyset$, then $J''_{ij} = \emptyset$. Otherwise 
    If $j = i+1$, there are endpoints $a_i\in S_i$ and $a_j\in S_j$ such that $J''_{ij} = c([a_i,a_j])$. Otherwise, there are endpoints $b_i,c_i\in S_i$ and $b_j,c_j\in S_j$ such that $J''_{ij} = c([b_i,c_i])\cap c([b_j,c_j])$. 
    Furthermore, determining all $J''_{ij}$ can be done in $O(n)$ time, and for each chord $j\in J''_{ij}$, $j\notin N_J[D-M_{ij}]$.
\end{lemma}
%\begin{proof}
    
    %$D_1 = N_J[a_i]-P_{ij}$, $D_2 = N_J[a_j]-D'_{ij}$, $D_3 = N_J[b_i]-P_{ij}$, $D_4 = N_J[c_i]-P_{ij}$, $D_5 = N_J[b_j]-P_{ij}$, and $D_6 = N_J[c_j]-P_{ij}$.
    %If $J'_{ij}=\emptyset$, then obviously $J''_{ij} = J'_{ij} - N_J[I] = \emptyset$.
    %Otherwise, in the case that $i=j-1$, let $a_i$ be the leftmost endpoint such that $c(a_i)\notin N_J[\B]$ on side $S_i$, and $a_j$ be the rightmost endpoint such that $c(a_j)\notin N_J[\B]$ on side $S_j$. Clearly, $c([a_i,a_j])\notin N_J[\B]$. Furthermore, since $D$ is a dominating set, each chord in $J_{ij} - c([a_i,a_j])\subseteq N_J[\B]$. Thus, $J'_{ij} = c([a_i,a_j])$. %time
    %In the remaining cases, let $a_i$ and $b_i$ be the leftmost and rightmost endpoint, respectively, such that $c(a_i),c(b_i)\notin N_J[\B]$ on side $S_i$, and $a_j$ and $b_j$ be the leftmost and rightmost endpoint, respectively, such that $c(a_j),c(b_j)\notin N_J[\B]$ on side $S_j$. Using a method similar to the case that $i=j-1$, one can verify that $J'_{ij} = c([a_i,b_i])\cap c([a_j,b_j])$.
    %Furthermore, all such endpoints can be determined by scanning every endpoints of $J$ once. Those imply that determining all $J''_{ij}$ can be done in $O(n)$ time.
    %Moreover, for each chord $j\in J'_{ij}$, since $j\notin N_J[\B]$ (where $\B$ including those chords with the leftmost and rightmost endpoints on $S_i$ and $S_j$ for each $D_{ij}$), it follow that $j\notin N_J[D-D_{ij}]$.
%\end{proof}

The idea of our algorithm is to exhaustively explore every possible outer boundary $\B$ and corresponding inner boundary $\I$. Then, based on these boundaries, we reconstruct them into a minimum paired-dominating set including them. Consequently, the minimum paired-dominating set of $G$ emerges.

\begin{algorithm}[htb]
	\caption{Determining the minimum paired-dominating set on $k$-polygon graphs}\label{algo:PDS_k-poly}
	\begin{algorithmic} [1]
		\baselineskip 14pt
		%\medskip
		\REQUIRE a $k$-polygon graph $G$.
		\ENSURE  a minimum paired-dominating set of $G$.
		
		\STATE Let $J$ be the intersection model of $G$;
  
		%\FOR{each possible subgraph $H$ of $G_J$}
		\FOR{each possible $\B$ and corresponding $\I$}
		  \FOR{each pair of sides $S_i$ and $S_j$ of $J$}
                \STATE Determine a minimum set $D''_{ij}\subseteq J_{ij}$ such that $J''_{ij}\subseteq N_J[D''_{ij}]$ while $J[D''_{ij}]$ contains a perfect matching;
            \ENDFOR
            
            \IF{$N_J[D''\cup\B\cup\I] = J$}
                \STATE Determine a minimum chord set $\Psi\subseteq J$ such that $J[\B\cup\I\cup\Psi]$ contains a perfect matching;
                \STATE Let $D \leftarrow D''\cup\B\cup\I\cup \Psi$;
            \ENDIF
		\ENDFOR

		\RETURN the minimum set $D$ in Step~(8).
	\end{algorithmic}
\end{algorithm}

\begin{theorem} \label{theorem:PDS_k-poly}
	Algorithm~\ref{algo:PDS_k-poly} finds minimum paired-dominating sets of $G$ in $O(\frac{n^{2k^2-2k+1}}{(k^2-k)^{2k^2-2k}})$ time.
\end{theorem}
\begin{proof}
    The correctness of Algorithm~\ref{algo:PDS_k-poly} is ensured since at least one minimum paired-dominating set $D$, where $J[D]$ contains a perfect matching $\mathcal{M}$, with the outer boundary $\B$ and the inner boundary $\I$ has been explored. Furthermore, one can verify that $J_{ij}$ is exactly a permutation graph and, by Lemma~\ref{lemma:inner_boundary}, for each $j\in J''_{ij}$, $j\notin N_J[J-M_{ij}]$. Thus, each $J''_{ij}$ can be replaced by the set $D''_{ij}$, determining in Step~$(4)$, while still ensuring that $D$ remains a minimum paired-dominating set.

    The time complexity is shown below.
    We first discuss the number of every possible outer boundary $\B$ and corresponding inner boundary $\I$ in Step~$(2)$. Let $n_{ij}$ is the cardinality of $J_{ij}$. If $i=j-1$, the number of possibilities to exhaustively explore $\B_{ij}$ and $\I_{ij}$ is on the order of $O((n_{ij})^2)$. Otherwise, the number of possibilities to exhaustively explore $\B_{ij}$ and $\I_{ij}$ is on the order of $O((n_{ij})^4)$. However, for each $i'\in [i+1,j-1]$ and $j'\in [j+1,i-1]$, if $\B_{i'j'} \neq \emptyset$, then $J_{ij} - N_J[\B_{i'j'}]$ might not be empty, implying $\I_{ij} \neq \emptyset$. Otherwise, $J_{ij} \subseteq N_J[\B_{i'j'}]$, implying $\I_{ij} = \emptyset$, resulting in a contradiction. Thus, the bound of cardinality of $\B$ and $\I$ is leading by exhausting $\B_{ij}$ for each $i,j$ and $\I_{ij}$ for $i=j-1$. Therefore, there are $\prod_{1\le i<j\le k} {n_{ij}}^{4}$ possibilities of boundaries. By the AM-GM inequality $\sqrt[C(k,2)]{\prod_{1\le i<j\le k} n_{ij}}\le \frac{\sum\nolimits_{1\le i<j \le k} n_{ij}}{C(k,2)} = \frac{n}{C(k,2)}$, where $C(k,2) = \frac{k^2-k}{2}$ represents the combination number for $1\le i<j\le k$, the possibilities of boundaries is $O((\frac{n}{k^2-k})^{2k^2-2k})$.
    In Step~$(4)$, determining all $D''_{ij}$ can be done in totaling $O(n)$ time by Lemmas~\ref{lemma:adjacent_paired_O(n)_permutation} and~\ref{lemma:general_paired_O(n)_permutation}.
    In Step~$(7)$, one can use the $O(|V|^2|E|)$-time blossom algorithm~\cite{Edmonds65} for $J[\B\cup\I]$, which takes $O(k^8)$ time.
    Hence, the time complexity of Algorithm~\ref{algo:PDS_k-poly} is $O(\frac{n}{k^2-k})^{2k^2-2k})\cdot (O(n) + O(k^8)) = O(\frac{n^{2k^2-2k+1}}{(k^2-k)^{2k^2-2k}})$.
\end{proof}

It is worth noting that the analysis of Algorithm~\ref{algo:PDS_k-poly} can be applied to improve the algorithm proposed in~\cite{ElMallah93}, which determines the minimum dominating set of $k$-polygon graphs in $O(n^{4k+3})$ time. The correctness and time complexity analysis of Algorithm~\ref{algo:DS_k-poly} are similar to those of Algorithm~\ref{algo:PDS_k-poly} and are therefore omitted here.

\begin{algorithm}[htb]
	\caption{Determining the minimum dominating set on $k$-polygon graphs}\label{algo:DS_k-poly}
	\begin{algorithmic} [1]
		\baselineskip 14pt
		%\medskip
		\REQUIRE a $k$-polygon graph $G$.
		\ENSURE  a minimum dominating set of $G$.
		
		\STATE Let $J$ be the intersection model of $G$;
  
		%\FOR{each possible subgraph $H$ of $G_J$}
		\FOR{each possible $\B$}
		  \FOR{each pair of sides $S_i$ and $S_j$ of $J$}
                \STATE Determine a minimum set $D'_{ij}\subseteq J_{ij}$ such that $J'_{ij}\subseteq N_J[D'_{ij}]$ while $J[D'_{ij}]$ contains a perfect matching;
            \ENDFOR

            \IF{$N_J[D'\cup \B] = J$}
                \STATE Let $D \leftarrow D'\cup \B$;
            \ENDIF
		\ENDFOR

		\RETURN the minimum set $D$ in Step~(7).
	\end{algorithmic}
\end{algorithm}

\begin{theorem} \label{theorem:DS_k-poly}
    Algorithm~\ref{algo:DS_k-poly} finds a minimum dominating set of $G$ in $O(\frac{n^{2k^2-4k+1}}{(k^2-k)^{k^2-k} (k^2-3k)^{k^2-3k}})$ time.
\end{theorem}
\begin{proof}
    The correctness of Algorithm~\ref{algo:PDS_k-poly} is ensured since at least one minimum dominating set $D$ with the outer boundary $\B$ has been explored. Furthermore, one can verify that $J_{ij}$ is exactly a permutation graph and, by Lemma~\ref{lemma:inner_boundary}, for each $j\in J'_{ij}$, $j\notin N_J[J-M_{ij}]$. Thus, each $J'_{ij}$ can be replaced by the set $D'_{ij}$, determining in Step~$(4)$, while still ensuring that $D$ remains a minimum paired-dominating set.

    The time complexity is as below.
    We first discuss the number of every possible outer boundary $\B$ in Step~$(2)$. Let $n_{ij}$ is the cardinality of $J_{ij}$. If $i=j-1$, the number of possibilities to exhaustively explore $\B_{ij}$ is on the order of $O((n_{ij})^2)$. Otherwise, the number of possibilities to exhaustively explore $\B_{ij}$ is on the order of $O((n_{ij})^4)$. Therefore, the number of possible boundaries is given by $\prod_{i = j-1} {n_{ij}}^{2}\cdot \prod_{i \neq j-1} {n_{ij}}^{4}$. This simplifies to $\prod_{1 \leq i < j \leq k} {n_{ij}}^2 \cdot \prod_{i \neq j-1} {n_{ij}}^2$.

    By the AM-GM inequality, $\sqrt[C(k,2)]{\prod_{1\le i<j\le k} n_{ij}}\le \frac{n}{C(k,2)}$ and $\sqrt[C(k,2)-k]{\prod_{i\neq j-1} n_{ij}}\le  \frac{n}{C(k,2)-k}$, where $C(k,2) = \frac{k^2-k}{2}$ represents the combination number for $1\le i<j\le k$, the possibilities of boundaries is $O(\frac{n^{2k^2-4k}}{(k^2-k)^{k^2-k} (k^2-3k)^{k^2-3k}})$.
    In Step~$(4)$, determining all $D'_{ij}$ can be done in totaling $O(n)$ time by Lemmas~\ref{lemma:adjacent_O(n)_permutation} and~\ref{lemma:general_O(n)_permutation}.
    Hence, the time complexity of Algorithm~\ref{algo:PDS_k-poly} is $O(\frac{n^{2k^2-4k+1}}{(k^2-k)^{k^2-k} (k^2-3k)^{k^2-3k}})$.
\end{proof}

%%%%%%%%%%%%%%%%%%%%%%%%%%%%%%%%%%%%%%%%%%%%%%%%%%%%%%
\subsection{Improved Algorithms}
\label{sectionalgo_2_for_k-polygon}
%%%%%%%%%%%%%%%%%%%%%%%%%%%%%%%%%%%%%%%%%%%%%%%%%%%%%%
In this subsection, we improve the time complexity of Algorithm~\ref{algo:DS_k-poly} for domination problem on $k$-polygon graph $G$ into $O(n^{3k-5})$. Let the drawing $G_J = (V_J, E_J)$ is constructed based on the intersection model $J$ of $G$, where each vertex $v_i \in V_J$ for $1 \leq i \leq k$ is positioned in the same order as the corresponding side $S_i$ of the $k$-polygon, and the edge $(v_i, v_j) \in E_J$ exists if and only if there is a chord of $J$ in $c(S_i) \cap c(S_j)$. We call an edge $e$ a \emph{cut} if the vertices shared by $e$ are not adjacent and no other edge in $E_S$ intersects $e$. Our algorithm is as below.

%In a circle drawing of $G$ with respect to $S$, denoted as $G_S$, each sector $s$ of $S$ corresponds to a vertex $v$ in $G_S$. For each vertex $v$ in the circle drawing $G_S$, let $s_v$ denote the corresponding leaf sector in $S$. The circle drawing $G_S$ ensures that each vertex $v$ in $G_S$ is placed on the circumference in the same order as its corresponding sector $s_v$ in $S$. Additionally, two vertices in $G_S$ are considered adjacent if there exists a chord connecting their corresponding sectors. Specifically, we define $G_S = (V_S, E_S)$, where 

\begin{algorithm}[htb]
	\caption{Determining the minimum dominating set on $k$-polygon graphs}\label{algo:DS_k-poly_improved}
	\begin{algorithmic} [1]
		\baselineskip 14pt
		%\medskip
		\REQUIRE a $k$-polygon graph $G$.
		\ENSURE  a minimum dominating set of $G$.
		
		\STATE Let $J$ be the intersection model of $G$;
		\STATE Let $D^* \leftarrow J$ initially;
  
		%\FOR{each possible subgraph $H$ of $G_J$}
		\FOR{each chord set $D\in J$ where $|D|\le 3k-6$} \label{step:3k-6}
		  \FOR{$t$ from $1$ to $\frac{k}{2}$} \label{step:t_from_1_to_k/2}
            \FOR{$i$ from $1$ to $k-t$}
            	\STATE $j = i+t$;
                \STATE Determine a minimum $T_{ij}\subseteq J_{ij}$ such that $c(S_i)\cap c(S_{j}) - N_J[D]\subseteq N_J[T_{ij}]$, while maximizing the number of chords in $J - N_J[D]$ dominated by $T_{ij}$; \label{step:permutation_improved}
                \STATE Update $D \leftarrow D\cup T$;
            \ENDFOR
            \ENDFOR

            \IF{$|D| < |D^*|$}
                \STATE Let $D^* \leftarrow D$;
            \ENDIF
            
		\ENDFOR

		\RETURN $D^*$.
	\end{algorithmic}
\end{algorithm}

%Before proving the correctness of Algorithm~\ref{algo:DS_k-poly_improved}, let $J' \subseteq J$ denote the set of chords that intersect with each other but do not share a common side.

\begin{theorem} \label{theorem:DS_k-poly_fast}
    Algorithm~\ref{algo:DS_k-poly_improved} determines a minimum dominating set of $G$ in $O(n^{3k-5})$ time.
\end{theorem}
\begin{proof}
    It suffices to prove the correctness by showing a minimum dominating set $D$ can be constructed by the process. Let $D^*$ be a minimum dominating set and $G_{D^*} = (V_{D^*},E_{D^*})$ be a drawing of ${D^*}$. We first extract at most $3k-6$ chords from $D^*$, assuming they are selected precisely in Step~(\ref{step:3k-6}), forming the union $D_1 \cup D_2 \cup D_3$. The set $D_1$ consists of chords $c \in D^* \cap c(S_i) \cap c(S_j)$ whose leftmost endpoints lie on their respective sides $S_i$ and $S_j$ for each cut $(v_i, v_j) \in E_{D^*}$. When $j = i + 1$, we select only the leftmost endpoints lying on $S_i$ and unite them into $D_2$. For $D_3$, it contains those chords $c \in D^* \cap c(S_i) \cap c(S_j)$ for each $(v_i, v_j) \in E_{D^*}$ that is not a cut. 
    The proof that $|D_1 \cup D_2 \cup D_3| \le 3k-6$ will be provided subsequently.
      
    Let $D = D_1 \cup D_2\cup D_3$ in initial. In Step~(\ref{step:permutation_improved}), since $J_{ij}$ remains a chord set of a $2$-polygon graph, which is equivalent to a permutation graph, one can apply the linear-time algorithm from~\cite{ElMallah93} to find a minimum dominating set $T_{ij}$ such that $J_{ij} \subseteq N_{J_{ij}}[D \cup T_{ij}]$ in $O(|J_{ij}|)$ time. After uniting $T_{ij}$ into $D$ for each previous iteration, since the leftmost endpoint of $S_i$ is decided by $D$, the dominating set obtained through this algorithm maximizes the number of chords in $J - N_J[D]$ that are dominated by $T_{ij}$. One can verify that $(D^* - D_1\cup D_2)_{ij}$ can be replaced by $T_{ij}$ for each $i,j$ while still remaining a minimum dominating set. Note that the overall process runs for each iteration of Step~(\ref{step:3k-6}) is $O(n)$ time. 
    
    At this point, the correctness of Algorithm~\ref{algo:DS_k-poly_improved} holds if $|D_1 \cup D_2 \cup D_3| \le 3k-6$.
    First, it is clearly that $|D_2|\le k$. Suppose that $D_1$ corresponds to $\alpha$ cuts in $G_{D^*}$. %Therefore, $|D_1|\le 2\alpha$, and these cuts divide the $G_{D_3}$ into $\alpha +1$ components $C_t$, where $0\le t\le \alpha$.
    It suffices to show that $|D_3| \le 2k-2\alpha-6$.  
    Let the set $D'$, included within the chord set determined in Step~(\ref{step:permutation_improved}), represents the chords $c \in D^* \cap c(S_i) \cap c(S_j)$ whose rightmost endpoints lie on their respective sides $S_i$ and $S_j$ if $(v_i, v_j)$ is a cut in $E_{D^*}$, or on $S_j$ if $j = i + 1$. One can see that the sole role of $D_3$ is to dominate the chords in $c(S_i)  \cap c(S_j) - N_J[D_1 \cup D']$ when $(v_i, v_j) \in E_{D^*}$ is not a cut. To achieve this, we actually have an alternative. Let $N_t$ be a maximum non-intersecting edge set of $G_{D_3} = (V_{D_3}, E_{D_3})$. For each $(v_i, v_j) \in E_{D_3}$, we can select an arbitrary chord $c \in c(S_i) \cap c(S_j)$ and pair it with another chord $c'$ that shares no common side with $c$. This alternative approach ensures that the chosen chords also dominate $c(S_i) \cap c(S_j) - N_J[D_1 \cup D']$ when $(v_i, v_j) \in E_{D^*}$ is not a cut. Moreover, since $G_{D^*}$ has at most $k-3$ cuts, the alternative approach selects at most $2(k-3-\alpha)$ chords. Therefore, $|D_3| \le 2k-6-2\alpha$. Combining this with $|D_1| \le 2\alpha$ and $|D_2| \le k$, we conclude that $|D_1 \cup D_2 \cup D_3| \le 3k-6$. This result confirms the correctness of Algorithm~\ref{algo:DS_k-poly_improved}. Moreover, the time complexity of Algorithm~\ref{algo:DS_k-poly_improved} is $O(n^{3k-6})*O(n) = O(n^{3k-5})$.
\end{proof}

Moerover, combining the result of Lemma~\ref{theorem:DS_k-poly_fast} and the technique of~\cite{Kratsch97}, we can obtain an $O(n^{3k-5})$ algorithm for total domination problem. 

\begin{theorem} \label{theorem:TDS_k-poly_fast}
	A minimum total dominating set of $G$ can be determined in $O(n^{3k-5})$ time.
\end{theorem}

\smallskip
%%%%%%%%%%%%%%%%%%%%%%%%%%%%%%%%%%%%%%%%%%%%%%%%%%%%%%
\section{Conclusion and Future Work} \label{section:conclusion}
%%%%%%%%%%%%%%%%%%%%%%%%%%%%%%%%%%%%%%%%%%%%%%%%%%%%%%
In this paper, we present two main results concerning paired domination on circle graphs and $k$-polygon graphs. For the result on circle graphs, we successfully demonstrated the NP-completeness of the paired-domination problem. Importantly, our unique reduction from the Hamiltonian path problem sets our approach apart from existing research methodologies. We encourage readers to employ this novel strategy in investigating the computational complexity of domination and its variants on circle graphs, as well as other classes of graphs. Notably, prior work by Damian and Pemmaraju~\cite{Damian06} has established the APX-hardness of the domination and total domination problems on circle graphs. Therefore, we conjecture that the paired-domination problem also maintains APX-hardness on circle graphs, presenting a promising avenue for exploration using our innovative reduction approach. Additionally, leveraging the $(2+\varepsilon)$-approximation algorithm for the domination problem from~\cite{Damian02}, a $(4+\varepsilon)$-approximation algorithm for the paired-domination problem on circle graphs can be readily derived for $\varepsilon>0$. Future research can focus on devising approximation algorithms with factors below $4$, representing an exciting direction for further investigation.

Next, we present another main result on $k$-polygon graphs. We first demonstrate the algorithm for paired-domination and domination problem, with time complexity $O(n(\frac{n}{k^2-k})^{2k^2-2k})$ and $O(\frac{n^{2k^2-4k+1}}{(k^2-k)^{k^2-k} (k^2-3k)^{k^2-3k}})$, respectively. Moreovre, we further improve the previous algorithm for domination problem into $O(n^{3k-5})$. We anticipate that the time complexity for the paired-domination one can also be significantly reduced in future research. Additionally, as circle graphs are a superclass of $k$-polygon graphs, the algorithm for solving domination problems and its variants on $k$-polygon graphs might provide a feasible direction for designing approximation algorithms on circle graphs.

\bibliographystyle{abbrv}
%\bibliography{PDom2}

\end{document}